\newtheorem{lem}{Lemma}
\newcommand{\newparentheses}[3]{%
  \expandafter\newcommand\csname #1\endcsname[1]{#2##1#3}%
  \expandafter\newcommand\csname #1L\endcsname[1]{\bigl#2##1\bigr#3}%
  \expandafter\newcommand\csname #1XL\endcsname[1]{\Bigl#2##1\Bigr#3}%
  \expandafter\newcommand\csname #1V\endcsname[1]{\left#2##1\right#3}}
\newcommand{\onenewattribute}[3]{%
  \@ifundefined{#1}{\let\@@def\newcommand}{\let\@@def\renewcommand}%
  \expandafter\@@def\csname #1\endcsname[2][]{%
    \ifthenelse{\equal{##1}{}}%
    {#2\csname #3\endcsname{##2}}%
    {#2_{##1}\csname #3\endcsname{##2}}}}
\newcommand{\newattribute}[2]{%
  \onenewattribute{#1}{#2}{parens}%
  \onenewattribute{#1L}{#2}{parensL}%
  \onenewattribute{#1XL}{#2}{parensXL}%
  \onenewattribute{#1V}{#2}{parensV}}
\newcommand{\C}{\mathcal{C}}
\newcommand{\E}{\mathcal{E}}
\newcommand{\F}{\mathcal{F}}
\newcommand{\G}{\mathcal{G}}
\renewcommand{\H}{\mathcal{H}}
\newcommand{\T}{\mathcal{T}}
\definecolor{green}{rgb}{0,0.5,0}
\definecolor{magenta}{rgb}{0.75,0,0.75}
\definecolor{brown}{rgb}{0.75,0.25,0}
\newcommand{\commentmarkerfont}{\tiny}
\newcommand{\commenttextfont}{\scriptsize}
\newcommand{\leo}{}
\newcommand{\revA}{}
\newcounter{comment}
\newcommand{\comment}[2]{}
\newcommand{\makecommentmarker}[1]{%
  \begingroup
  \setlength{\fboxsep}{1pt}%
  \csname #1commentcolor\endcsname
  \fbox{\commentmarkerfont\thecomment}%
  \endgroup}
\newcommand{\commenttext}[2]{%
  \marginpar{%
    \raggedright
    \leavevmode
    \csname #1commentcolor\endcsname
    \makecommentmarker{#1}
    \begingroup\setlength{\fboxsep}{1pt}\fbox{\commentmarkerfont #1}\endgroup\
    \commenttextfont #2}}
\begin{document}

\title{Hybridization Number on Three \revA{Rooted Binary} Trees \revA{is EPT}}
\author{Leo van Iersel
\footnote{Delft Institute of Applied Mathematics, Delft University of Technology, P.O. Box 5, 2600 AA Delft, The Netherlands, l.j.j.v.iersel@gmail.com}
\thanks{Leo van Iersel and Nela Leki\'{c} were respectively funded by a Veni and a Vrije Competitie grant from The Netherlands Organisation for Scientific Research (NWO). Leo van Iersel was partly funded by the 3TU Applied Mathematics Institute.}
\and Steven Kelk
\footnote{Department of Knowledge Engineering (DKE), Maastricht University, P.O. Box 616, 6200 MD Maastricht, The Netherlands, steven.kelk@maastrichtuniversity.nl, nela.lekic@maastrichtuniversity.nl}
\and Nela Leki\'{c}
\footnotemark[2]
\footnotemark[3]
\and Chris Whidden
\footnote{Fred Hutchinson Cancer Research Center,
1100 Fairview Ave. N.,
PO Box 19024,
Seattle, WA, USA 98109,
whidden@cs.dal.ca.
Chris Whidden is a Simons Foundation Fellow of the Life Sciences Research Foundation and was supported by National Science Foundation award 1223057.}
\and Norbert Zeh
\footnote{Faculty of Computer Science,
Dalhousie University,
6050 University Ave,
Halifax, NS B3H 1W5,
Canada,
nzeh@cs.dal.ca.
Research funded in part by the Natural Sciences and Engineering Research Council of Canada and the Canada Research Chairs programme.}
}
\maketitle


\begin{abstract}\noindent
Phylogenetic networks are leaf-labelled directed acyclic graphs that are used to describe non-treelike evolutionary histories and are thus a generalization of phylogenetic trees. The hybridization number of a phylogenetic network is the sum of all in-degrees minus the number of nodes plus one. The \textsc{Hybridization Number} problem takes as input a collection of \revA{rooted binary} phylogenetic trees and asks to construct a phylogenetic network that contains an embedding of each of the input trees and has the smallest possible hybridization number. We present an algorithm for the \textsc{Hybridization Number} problem on three binary phylogenetic trees on~$n$ leaves that runs in time $\OhOf{c^k \poly{n}}$, with~$k$ the hybridization number of an optimal network and~$c$ some \revA{(astronomical)} constant. For the case of two trees, an algorithm with running time $\OhOf{3.18^k n}$ was proposed before whereas an algorithm with running time $\OhOf{c^k \poly{n}}$, \revA{also called an EPT algorithm, had prior to this article remained elusive for more than two trees}. The algorithm for two trees uses the close connection to acyclic agreement forests to achieve a linear exponent in the running time, while previous algorithms for more than two trees (explicitly or implicitly) relied on a brute force search through all possible underlying network topologies, leading to running times that are not $\OhOf{c^k \poly{n}}$, for any~$c$. The connection to acyclic agreement forests is much weaker for more than two trees, so even given the right agreement forest, the reconstruction of the network poses major challenges. We prove novel structural results that allow us to reconstruct a network without having to guess the underlying topology. Our techniques generalize to more than three input trees with the exception of one key lemma that maps nodes in the network to tree nodes in order to minimize the amount of guessing involved in constructing the network. The main open problem therefore is to prove results that establish such a mapping for more than three trees.
\end{abstract}

\clearpage

\section{Introduction}

In computational biology the evolutionary history of a set of contemporary species (or \emph{taxa}) is often modelled
as a \emph{rooted phylogenetic tree}. Informally this is a rooted tree in which the leaves are bijectively labelled by
the taxa and edges are directed away from the root, reflecting the direction of evolution \cite{SempleSteel2000}. Nodes of
out-degree two or higher model the points in history at which a common ancestor of a subset of the taxa differentiated
into two or more sublineages. The central problem in phylogenetics is to recover the topology of the ``true''
phylogenetic tree, given only information about the taxa, often DNA data. This is a challenging computational
problem and has been the topic of intensive research during the last 40 years \cite{reconstructingevolution}. Recently our understanding of
evolutionary mechanisms has deepened and there is growing awareness that evolution is not always treelike \cite{expanding}. In particular,
due to \emph{reticulate phenomena} such as hybridization and horizontal gene transfer \cite{davidbook}, the evolution of a set of
species is sometimes better modelled as a \emph{rooted phylogenetic network}~\cite{HusonRuppScornavacca10}, essentially a generalization of
phylogenetic trees to directed acyclic graphs (DAGs). In such graphs, nodes with indegree two or higher, known as \emph{reticulations}, represent the points at which two or more lineages merge, rather than diversify.

The study of rooted phylogenetic networks is comparatively new and has given rise to many novel and hard combinatorial optimization problems \cite{HusonRuppScornavacca10}. In this article we focus on the
\textsc{Hybridization Number} problem, originally introduced in \cite{BaroniEtAl2005,BaroniEtAl2004}, which is one of the most well-studied phylogenetic network problems to date. Here
we are given a set of rooted phylogenetic trees~$\mathcal{T}$, on the same set of taxa~$X$, and the goal is to
construct a phylogenetic network---henceforth called a \emph{hybridization network}---that contains an image
of each of the input trees, while minimizing the hybridization number~$k$ of the network. If we restrict (without loss of generality) to networks with maximum in-degree two, the hybridization number is simply equal to the number of reticulation nodes. We defer exact definitions to the preliminaries. \leo{See Figure~\ref{fig:invisible} for an example of a hybridization network (with hybridization number three) for three input trees.}

The holy grail for this problem is to develop algorithms that can cope with
many input trees and non-binary input trees~\cite{davidbook} (and to take different causes of incongruence into account, see e.g. \cite{yu2013parsimonious}).
 However, thus far most algorithmic research has focused on the simplest possible case: $|\mathcal{T}|=2$ and both input trees are binary. Unfortunately even this version of the problem is NP-hard and APX-hard \cite{bordewich}, with similar
(in)approximability properties to the classical problem \textsc{Directed Feedback Vertex Set}~\cite{cyclekiller}. Fortunately the binary
two-tree problem is fixed parameter tractable (FPT) in $k$. (See \cite{downey1999,niedermeier2006} for
an introduction to fixed parameter tractability). This result was initially established via kernelization---the problem has a quadratic kernel \cite{sempbordfpt2007}---but the theoretical state of the art is an algorithm based on bounded-search with running time $\OhOf{ 3.18^{k} \cdot \poly{n} }$ \cite{whidden2013fixed}, where~$n=|X|$. The comparative tractability of the problem, both
in theory and practice (see e.g. \cite{chen2013ultrafast} for a fast implementation), stems from the essentially one-to-one relationship between solutions to the two-tree problem and the \emph{Maximum Acyclic Agreement Forest} (MAAF) problem. In the latter problem (originally introduced in \cite{BaroniEtAl2005}) one is required to cut the two input trees into common components so that the number of components is minimized and there are no cyclical dependencies between components. The MAAF abstraction gives a useful static characterization of the two-tree \textsc{Hybridization Number} problem \cite{bordewich}. In particular, in the two-tree case the MAAF abstraction essentially allows us to bypass the problem
of actually constructing the hybridization network: it can easily be constructed  in polynomial time from the components
of the MAAF.  The MAAF abstraction, and related FPT results, also hold in the case of two \emph{non}-binary trees, albeit with significant technical complications \cite{linzsemple2009,simplefpt}. 

For $|\mathcal{T}| > 2$ the situation becomes more complex, however, even when restricted to binary trees\footnote{For the rest of the introduction we focus only on the case of binary trees---see \cite{WG2014} and \cite{nonbinary} for an overview of
recent non-binary results. The non-binary case is a generalization of the binary case and therefore inherits all
the negative results, but not the positive results, of the binary problem.} and $|\mathcal{T}|=3$. The MAAF abstraction weakens significantly and cannot (obviously) be used to generate optimal solutions to the \textsc{Hybridization Number} problem.  Without the MAAF abstraction it seems that we have to explicitly confront the
challenge of actually constructing the hybridization network itself. This is a theoretically daunting challenge, since the
space of DAGs is huge. The good news is that for $|\mathcal{T}|>2$ the problem nevertheless remains FPT in $k$ \cite{vanIerselLinz,towards}. The
bad news is that none of these results satisfactorily address the problem of actually constructing
the network. The FPT result in \cite{vanIerselLinz} gives a quadratic kernel, but does not describe a (good) algorithm for solving the kernel. The bounded-search FPT result in \cite{towards}, based on \cite{KelkScornavacca2011}, does actually construct the network, but has an astronomical running time. The running time is so large because it brute forces over the space of all possible \emph{generators}, i.e., possible ``backbone topologies'' of the network~\cite{elusiveness,KelkScornavacca2011}, a space which is not known to be~$\OhOf{c^k}$, and continues with a tower of guesses, which is not~$\OhOf{c^k}$, for each such generator. At present, therefore, the only FPT algorithms for the
case of three binary trees are either kernelizations, or bounded-search algorithms with an exponential dependency on~$k$ with a non-linear exponent. Several exponential-time algorithms do exist, such as \cite{wu2013algorithm} and the algorithm discussed in \cite{vanIerselLinz}, but using them to solve a kernelized hybridization number instance unfortunately does not help for two reasons. Firstly, the size of the best-known kernel (i.e. the number~$n$ of leaves of a kernelized instance) is quadratic, and not linear in~$k$. Secondly, no previously-known exponential-time algorithm has an~$\OhOf{c^{n}}$ running time. Therefore, the challenge is to determine whether an algorithm with running time $\OhOf{ c^{k} \cdot \poly{n} }$ exists for the case of three binary trees. In other words, is the problem EPT~\cite{flum2006bounded}?

  \begin{figure}[p]
    \centering
    \includegraphics{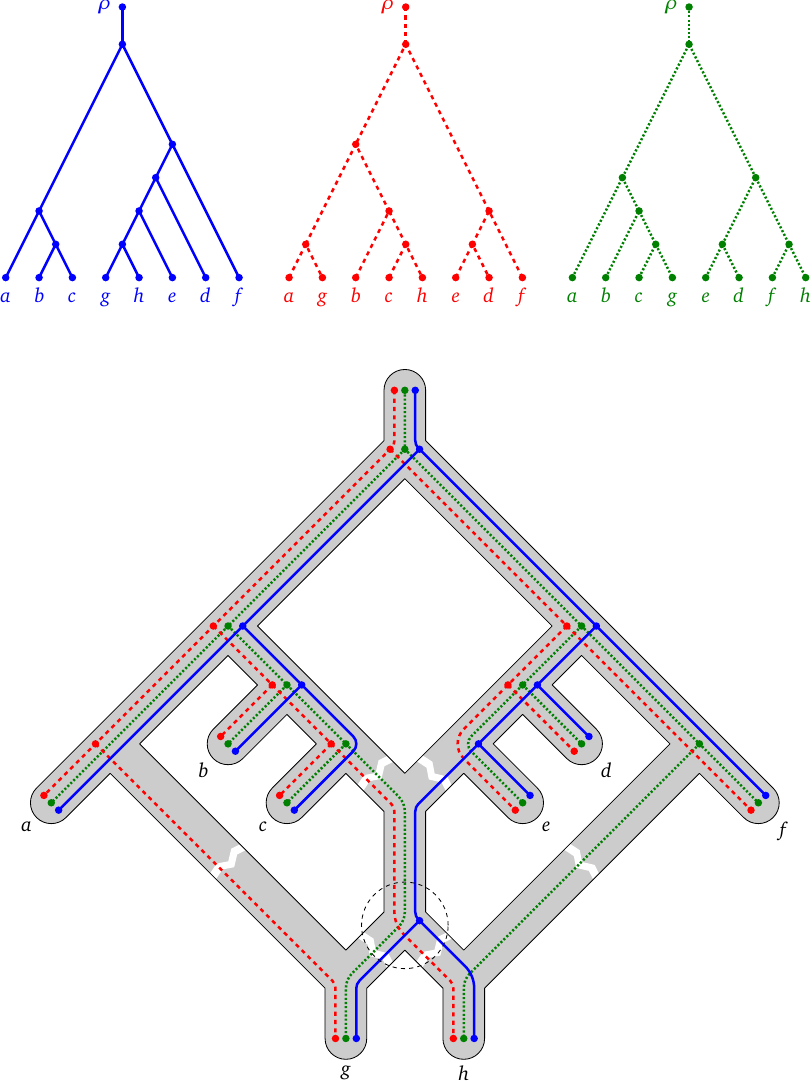}
    \caption{A hybridization network for three trees which contains an invisible component (inside the dashed circle). It can be shown that any hybridization network for these trees contains an invisible component. \leo{However, the single node inside this component can be identified beause it corresponds to a node of the blue solid input tree.}}
    \label{fig:invisible}
  \end{figure}

In this article we answer this challenge positively. Although the constant $c$ that we find is astronomical---1,609,891,840---it represents a significant development in our understanding of the underlying combinatorial structure of the \textsc{Hybridization Number} problem. We show that, although it is not clear how a MAAF can be pieced together into an optimal solution to the \textsc{Hybridization Number} problem, it is still possible to identify in $\OhOf{c^k \poly{n}}$ time a (not necessarily maximum) acyclic agreement forest that does have this property. Having found the appropriate acyclic agreement forest, we use deep\comment{LI}{novel?} insights into the structure of optimal hybridization networks to piece the components of the forest together into a network. \leo{The difficulty of this step comes from the fact that, unlike in the two-tree case, it is no longer possible to avoid having nodes in the network that are separated from all leaves by hybridization edges, and that are hence not represented in the agreement forest. The main insight helping to overcome this problem is that,} in the case $|\mathcal{T}|=3$, there always exists an optimal hybridization network such that each of its out-degree-2 nodes corresponds to nodes of one or more of the input trees, \leo{see Figure~\ref{fig:invisible}}. This enables us to keep the combinatorial explosion in the number of possible network topologies under control. 

Note that our algorithm can be
viewed as a structural generalization of existing algorithms for two trees, which also separate
the identification of the underlying acyclic agreement forest and the construction of the network
into two phases. In the case of two trees
the second phase is polynomial and it is comparatively easy to obtain $\OhOf{c^k \poly{n}}$ running times for the first phase.
In fact, although our overall result at present only holds for the case $|\mathcal{T}|=3$, the
results for the first phase hold without modification
for the case $|\mathcal{T}| > 3$. As we demonstrate, the only barrier to extending our result is the fact that,
for $|\mathcal{T}|>3$, the combinatorial insight mentioned in the previous paragraph no longer holds. Indeed, there are two new challenges stemming from this article. Firstly, to adapt and generalize the combinatorial insight so that the wider result can be extended to four or more trees. Secondly, to significantly optimize the constant $c$ in our running time. How close can we get to the competitive $\OhOf{ 3.18^{k} \cdot \poly{n} }$ running time achieved in the case of two trees?

The structure of the remainder of this article is as follows. In Section~\ref{sec:prelim}, we present the necessary definitions. Section~\ref{sec:aaf-guessing} shows how we can guess the underlying acyclic agreement forest of an optimal hybridization network in $\OhOf{c^k \poly{n}}$ time. Then we define a notion of \revA{``tight''} networks in Section~\ref{sec:network-to-tight-network} (basically, networks where each out-degree-2 node corresponds to a node of at least one of the input trees) and show that we may restrict our attention to \revA{tight} networks as long as there are at most three trees in the input. Subsequently, Section~\ref{sec:guessing-things} shows how such a \revA{tight} network can be reconstructed from an acyclic agreement forest and the input trees in $\OhOf{c^k \poly{n}}$ time. Finally, we present our conclusions in Section~\ref{sec:concl} and give an example of the algorithm in the appendix.

\section{Preliminaries}\label{sec:prelim}

A \emph{rooted phylogenetic $X$-tree} is a rooted tree with no nodes with in-degree~1 and out-degree~1, a \revA{node} with in-degree 0 and out-degree 1 \revA{(the root)}, and leaves bijectively labelled with the elements of a finite set~$X$. Such a tree is called binary if all inner nodes except the root have in-degree~1 and out-degree~2. From now on we will refer to a rooted binary phylogenetic $X$-tree as a \emph{tree} for short, \revA{since we only consider rooted binary trees that are all on the same set~$X$}. The convention that roots have out-degree 1 is not essential but for technical convenience.

A \emph{rooted phylogenetic network} \revA{(on~$X$)} is a directed acyclic graph (DAG) with no nodes with in-degree 1 and out-degree 1, a single in-degree-$0$ node (the \emph{root}) with out-degree 1 and leaves bijectively labelled with the elements of~$X$. Rooted phylogenetic networks will be called \emph{networks} for short. We identify each leaf of a tree or network with its label and call directed edges \emph{edges} for short. \leo{We see the root of a tree or network as a leaf and assume without loss of generality that it is labelled~$\rho$.} We call a network \emph{binary} if every \leo{non-leaf} node has total degree~3 and all leaves have degree~1.

We call network nodes with in-degree~1 and out-degree \leo{at least~2} \emph{split nodes}, while nodes with in-degree \leo{at least~2} are called \emph{reticulation nodes}, or \emph{reticulations} for short. The \emph{hybridization number} (often also called reticulation number) of a binary network~$N$ is defined as the number of reticulation nodes of~$N$. For a general network the hybridization number is given by the sum $\sum(d^-(v)-1)$ over all nodes~$v$ \leo{of}~$N$ with in-degree $d^-(v)$ at least 2. For a tree $T$ and a set $X' \subseteq X$, we define $T(X')$ as the minimal subtree of $T$ that contains all elements of $X'$, and \revA{$T|X'$} as the result of suppressing all nodes of $T(X')$ with in- and out-degree~1. The set of leaves of a tree~$T$ is denoted $L(T)$.

We say that a tree~$T$ is \emph{displayed} by a network~$N$ if~$T$ can be obtained from a subgraph of~$N$ by contracting edges. Given a set~$\T$ of rooted phylogenetic trees, the \textsc{Minimum Hybridization} problem asks to find a phylogenetic network~$N$ that displays each tree in~$\T$ such that the \leo{hybridization} number of~$N$ is minimized. We say that~$N$ is a \emph{hybridization network for} a set~$\T$ of input trees if~$N$ displays all~$T\in\T$. In addition, we say that the \emph{hybridization number} of a set of input trees~$\T$ is the hybridization number of a hybridization network for~$\T$ \leo{that has the lowest hybridization number over all hybridization networks for~$\T$}. It is well known, and easy to see, that if there exists a hybridization network for~$\T$, there also exists a binary hybridization network for~$\T$ \leo{with the same hybridization number} (see e.g.~\cite{towards}). Therefore, all hybridization networks are from now on assumed to be binary.

Let~$T$ be a rooted, \revA{binary} phylogenetic $X$-tree and~$S$ a rooted, \revA{binary} phylogenetic $X'$-tree for some $X' \subseteq X$. We say that~$S$ is a \emph{pendant subtree} of~$T$ if it is a subtree that can be detached from~$T$ by deleting a single edge. For a set~$\T$ of phylogenetic $X$-trees and $X' \subseteq X$, a \emph{common pendant subtree} of~$\T$ is a rooted phylogenetic $X'$-tree that is a pendant subtree of each tree in~$\T$. \revA{A common pendant subtree is called \emph{trivial} if it consists of a single leaf.} Let~$T$ be a tree, let $(x_1, x_2,...,x_q)$ be a tuple of elements of~$X$ with \leo{$q \geq 1$} and let $p_i$ denote the parent of~$x_i$ in~$T$. We say that the tuple $(x_1, x_2,...,x_q)$ is a \emph{chain} of~$T$ if either $(p_q, p_{q-1}, \ldots, p_1 )$ is a directed path in $T$, or $(p_q, p_{q-1}, ..., p_2)$ is a directed path in $T$ and $p_1 = p_2$. A \emph{common chain} of a set~$\T$ of trees is a maximal tuple $(x_1, x_2,...,x_q)$ that is a chain of each tree in~$\T$.


Related to the hybridization number problem is a concept of agreement forests. A forest is a collection of trees, which we will call components rather than trees to avoid confusion with the input trees. We say that a forest~$\F$ is a \emph{forest for} a tree~$T$ if $T|L(F)$ is isomorphic\comment{LI}{It is probably clear that we need that each leaf is mapped to the leaf with the same label} to~$F$ for all~$F\in\F$ and the trees $\{T(L(F))\mid F \in \F \}$ are node-disjoint subtrees of~$T$ whose leaf-set union equals $L(T)$. By this definition, if~$\F$ is a forest for some tree~$T$, then $\{L(F)\mid F \in \F\}$ is a partition of the leaf set of~$T$. It will indeed sometimes be useful to see a forest as a partition of the leaves and sometimes to see it as a collection of trees. If $\T$ is a set of trees, then a forest~$\F$ is an \emph{agreement forest} of~$\T$ if it is a forest for each~$T\in\T$. Note that these definitions only apply to binary trees.

We define the \emph{inheritance graph} $\mathit{IG}(\T, \F)$ of an agreement forest~$\F$ of a set~$\T$ of trees as the directed graph whose node set is the set of components of~$\F$ and whose edge set contains an edge~$(F,F')$ precisely if there is a directed path from the root of $T(L(F))$ to the root of $T(L(F'))$ in at least one tree $T\in\T$. An agreement forest~$\F$ of $\T$ is called an \emph{acyclic agreement forest} (AAF) if the graph $\mathit{IG}(\T,\F)$ does not contain any directed cycles.

The last definition we need is the notion of generators (see e.g.~\cite{KelkScornavacca2011}), which we use to describe the underlying structure of networks without non-trivial pendant subtrees. A (binary) \emph{$r$-reticulation generator} is defined as an acyclic directed multigraph with a single root with in-degree 0 and out-degree~1 and exactly~$r$ nodes with in-degree~2 and out-degree at most~1; all other nodes have in-degree 1 and out-degree 2. If~$N$ is a network, then the \emph{underlying generator} of~$N$ is the generator obtained from~$N$ by deleting all leaves and suppressing all in-degree-1 out-degree-1 nodes. The \emph{sides} of a generator are its edges (the \emph{edge sides}) and its nodes with in-degree 2 and outdegree 0 (the \emph{node sides}). Thus, each leaf of a network~$N$ is on a certain side of its underlying generator. See Figure~\ref{fig:generator} for an example.

\begin{figure}[t]
 \centering
  \includegraphics{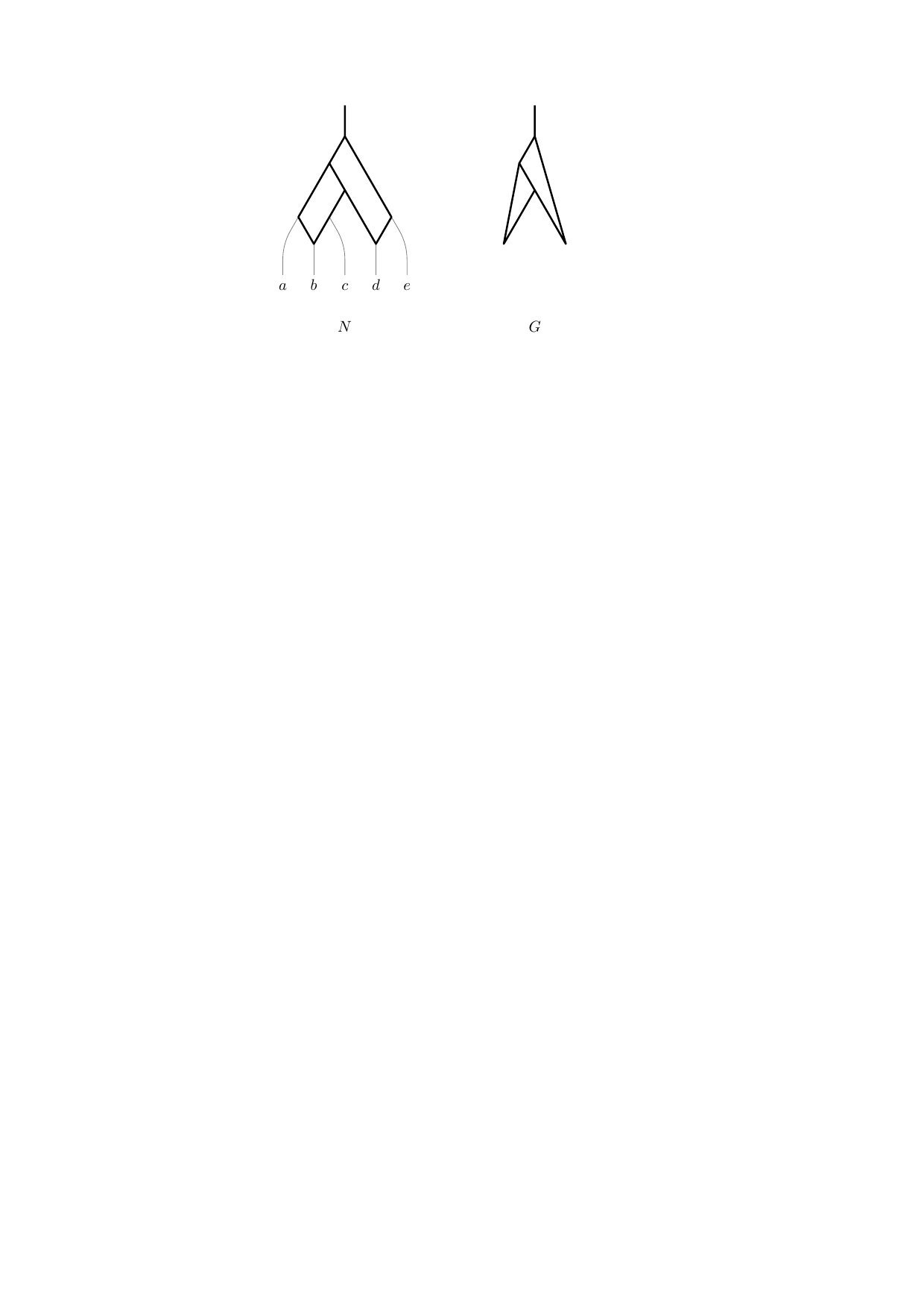}
 \caption{The graph $G$ is the generator of the network $N$. It has 9 sides: 7 edge sides and 2 node sides. The leaves~$b$ and~$d$ are on node sides while~$a$,  $c$, and~$e$ are on edge sides.}
 \label{fig:generator}
\end{figure}

\section{Guessing the AAF}

\label{sec:aaf-guessing}

\noindent
\leo{Let~$\T$ be a collection of input trees.} Without loss of generality we will assume that \leo{$\T$ contains} no non-trivial common pendant subtrees (because each such subtree can be replaced by a single leaf). \leo{In this section, we show how we can guess an AAF from which we can build an optimal hybridization network for~$\T$. To make this precise, we define the \emph{deletion forest} of a network~$N$ as} the forest obtained from~$N$ by deleting all the edges entering reticulation nodes, deleting all resulting connected components that do not contain any taxa, and then taking the partition of the taxa induced by the remaining connected components. Note that for a given network the deletion forest is uniquely defined. \leo{We start by proving the following lemma.}

\begin{lem}
  \label{lem:network-to-aaf}
Given a hybridization network~$N$ with hybridization number~$k$ \leo{for a set~$\T$ of input trees}, the deletion forest of~$N$ \leo{is} an AAF of~$\T$ with at most $k+1$ components.
\end{lem}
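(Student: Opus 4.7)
The plan is to prove the lemma via three sub-claims: (i) the forest has at most $k+1$ components, (ii) it is a forest for every tree in $\T$, and (iii) its inheritance graph is acyclic. Claim~(i) follows immediately from the fact that, since $N$ is binary with hybridization number $k$, it has exactly $k$ reticulations, each of indegree $2$, so deleting the $2k$ reticulation edges leaves a graph whose sources are precisely the root of $N$ together with the $k$ reticulations, yielding $k+1$ connected components; discarding the taxa-free ones can only reduce the count.

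For (ii) I would fix $T \in \T$ and pick a subgraph $N_T \subseteq N$ that witnesses $N$ displaying $T$ (a tree yielding $T$ after suppressing indegree-1 outdegree-1 nodes, using at most one incoming edge at each reticulation). The heart of the argument is the following observation: for every deletion component $C$ of $N$ that contains a taxon, with root $r_C$, and every taxon $\tau \in L(C)$, the path from $\tau$ upward in $N_T$ coincides with the path from $\tau$ to $r_C$ in $C$ until it reaches $r_C$. This holds because every non-root node of $C$ is a non-reticulation node in $N$ and hence has a unique $N$-parent, which must lie in $N_T$ whenever the node itself does. Two consequences follow. First, for any pair of taxa in $L(C)$ their lowest common ancestor in $N_T$ lies in $C$ and equals their lowest common ancestor in $C$; as both the restriction $T|L(F)$ and the phylogenetic tree $F$ obtained by cleaning up $C$ (iteratively pruning unlabelled leaves and suppressing indegree-1 outdegree-1 nodes) are determined by the LCAs on $L(F)$, they are isomorphic. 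Second, partitioning the nodes of $T$ according to which deletion component contains their image under the embedding yields a partition into connected subtrees of $T$, which are precisely the $T(L(F))$'s; hence these subtrees are node-disjoint. Finally, $\bigcup_{F \in \F} L(F) = X$ since every taxon lies in a unique deletion component. The main technical obstacle will be handling carefully the possibility that $N_T \cap C$ is a strict subset of $C$ (the missing parts being ``dead-end'' subtrees with no taxa in $C$ that disappear in the cleanup).

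For (iii), suppose for contradiction that $IG(\T, \F)$ contains a directed cycle $F_1 \to F_2 \to \cdots \to F_m \to F_1$, and for each edge $F_i \to F_{i+1}$ pick a witness $T_i \in \T$. The LCA observation from (ii) implies that the root of $T_i(L(F_j))$ maps, under the embedding of $T_i$, to the LCA $u_j$ of $L(F_j)$ in the deletion component of $F_j$, and that this node $u_j$ depends only on $j$, not on $i$. Each witness therefore yields a directed path of length at least one in $N$ from $u_i$ to $u_{i+1}$, and concatenating these around the cycle produces a directed closed walk of positive length in $N$, contradicting the acyclicity of $N$.
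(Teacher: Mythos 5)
Your proof is correct and follows essentially the same route as the paper's: count the components via the sources of the graph obtained by deleting all reticulation edges (root plus $k$ reticulations, hence at most $k+1$ components with taxa), then derive the agreement-forest properties from a fixed embedding of each input tree. You are in fact more thorough than the paper, whose proof justifies the agreement conditions only with ``because $N$ displays the trees'' and never explicitly verifies acyclicity of the inheritance graph, whereas your part~(iii) supplies that argument cleanly via the concatenation of directed paths between the images $u_j$ of the component roots. One small overstatement in part~(ii): the set of nodes of $T$ whose images land in a deletion component $C$ need not be \emph{precisely} $T(L(F))$ --- it can properly contain it, for instance when $C$ has a branching node of the embedding all of whose descendant taxa are reached only after crossing reticulation edges --- but node-disjointness still follows from the containment of each $T(L(F))$ in the preimage of its component together with the pairwise disjointness of the components, so nothing in your argument breaks.
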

\begin{proof}
We first show that the deletion forest contains at most $k+1$ components. To see this, note that~$N$ contains exactly~$k$ reticulation nodes. For a reticulation node~$r$, let $X(r)$ be the set of taxa that can be reached from~$r$ by directed paths that start at~$r$ and which do not intersect with any reticulation \leo{apart from~$r$}. (Possibly, $X(r)=\emptyset$.) By construction, none of the edges on these directed paths are deleted when the deletion forest is created. Hence, all the taxa in $X(r)$ will be in the same connected component. Similarly,~\leo{if $X(\rho)$ denotes the set of taxa reachable by directed paths that start at the root and which do not intersect with any reticulations, then the taxa in~$X(\rho)$ will also be together in the same connected component. Note that the deletion forest~$\F$ of~$N$ (seeing it as a partition of the taxa) is the collection containing~$X(\rho)$ if~$X(\rho)\neq\emptyset$ and~$X(r)$ for each reticulation~$r$ for which~$X(r)\neq\emptyset$. Hence, the deletion forest contains at most~$k+1$ components. Moreover, for each~$F\in\F$, each input tree must yield the same subtree when restricted to the subset of taxa of~$F$ because~$N$ displays all the input trees. In addition, for each input tree~$T\in\T$, the subtrees $\{T(L(F)) \mid F\in\F\}$ are node-disjoint, again because~$N$ displays~$T$. It follows that the deletion forest~$\F$ of~$N$ is indeed an \revA{agreement forest} of the input trees,
with at most $k+1$ components. \revA{Moreover, it is clearly acyclic since the network is acyclic.}}
\end{proof}

As a consequence of Lemma~\ref{lem:network-to-aaf}, we will from now on refer to the deletion forest of a network as its \emph{deletion AAF}. \leo{Next we show how to guess the deletion AAF of some optimal hybridization network for the input trees.} \revA{More precisely, we show how to construct a set of AAFs containing at least one AAF with this property. In Sections~\ref{sec:network-to-tight-network} and~\ref{sec:guessing-things} we will show how to determine from which AAF(s) in the set we can build an optimal hybridization network.}
\begin{lem}
  \label{lem:aaf-to-network}
Let~$k$ be the hybridization number of the \leo{set~$\T$} of input trees. Then, in time
$\OhOf{c^k \cdot \poly{n}}$, we can find a set of AAFs containing at least one deletion AAF of some hybridization network for~$\T$ with hybridization number~$k$.
\end{lem}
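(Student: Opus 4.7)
The plan is to show that enumerating all AAFs of~$\T$ with at most~$k+1$ components can be done in time $\OhOf{c^k \poly{n}}$, since by Lemma~\ref{lem:network-to-aaf} the deletion AAF of every hybridization network for~$\T$ with hybridization number~$k$ belongs to this collection. Outputting each enumerated AAF as a ``guess'' then ensures that at least one guess coincides with the deletion AAF of some optimal hybridization network for~$\T$, which is what the lemma demands.

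The enumeration would be carried out by a bounded-search-tree algorithm that maintains a partition~$\F$ of~$X$, starting from the trivial single-block partition. At each node of the search tree I would first apply safe reduction rules on the restrictions of the input trees to the blocks of~$\F$: contract common pendant subtrees and shorten long common chains to a bounded length. If no topological conflict remains between these restricted trees, the current~$\F$ is already an AAF and is output. Otherwise, I would identify a witness of inconsistency of constant size---such as a small set of leaves on which two of the input trees induce incompatible topologies---and argue that any AAF refining~$\F$ must separate this witness by cutting at least one of a constant number~$c_0$ of candidate edges. Branching on this choice decreases the remaining budget (initialised to~$k$) by one, since each cut increases the number of blocks by one. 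After at most~$k$ branching steps the budget is exhausted, bounding the size of the search tree by~$c_0^k$.

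At each leaf of the search tree I would verify in polynomial time that the current partition is indeed an AAF and that its inheritance graph $IG(\T,\F)$ is acyclic, outputting it if so. Together with $\poly{n}$ work per search-tree node this yields the claimed $\OhOf{c^k \poly{n}}$ running time with $c = c_0$. Correctness amounts to showing that, at every non-AAF state encountered during the search, the deletion AAF of any hybridization network whose blocks refine the current partition must use at least one of the candidate cuts, so that no target AAF is missed.

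The main obstacle is the design of the local branching rule. The branching factor~$c_0$ must be bounded independently of~$n$, and the rule must be provably exhaustive in the presence of three (or more) trees, since the cherry-based arguments that drive the two-tree MAF algorithms do not extend directly. One must also ensure that the acyclicity requirement does not cause the deletion AAF of some optimal network to be discarded prematurely. The technical content of the proof is therefore concentrated in this branching analysis; the remaining ingredients---safe reduction rules, polynomial-time verification of acyclicity, and bounding the recursion depth by~$k$---are comparatively standard adaptations of existing agreement-forest techniques.
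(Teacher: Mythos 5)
Your proposal takes a genuinely different route from the paper --- a conflict-driven bounded-search-tree enumeration rather than the paper's chain-based brute force --- but as written it has a gap that is not merely the ``main obstacle'' you flag at the end. The lemma does \emph{not} ask for a maximum AAF: the target is the deletion AAF of some optimal hybridization network, which the paper is at pains to point out need not be maximum and may be a strict refinement of an agreement forest that is already conflict-free. Your search tree outputs $\F$ and stops branching as soon as ``no topological conflict remains'', so it only reaches agreement forests in which every cut was forced by some incompatibility witness. The deletion AAF of the optimal network can require additional cuts that no topological conflict forces (for instance, cuts needed so that the components can be glued into a network at all, or so that the inheritance graph becomes acyclic), and such a forest would never appear among your leaves. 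To repair this you would have to keep branching past conflict-free states, at which point the branching factor is no longer a constant --- any of the $\Theta(n)$ surviving edges could be the next cut --- and the $c_0^k$ bound collapses.

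The paper's proof sidesteps both this issue and the branching-rule design you defer. It fixes a single input tree $T$ and observes that the target AAF is obtained from $T$ by deleting at most $k$ edges; the whole problem is to restrict the candidate edge set to size $\OhOf{k}$. This is done structurally, not by conflict analysis: the underlying generator of an optimal network has at most $5k-1$ sides, any two taxa on a common edge side lie in a common chain, and (by a result of~\cite{vanIerselLinz}) each common chain may be assumed either to sit entirely on one side or to have all its taxa on distinct sides. Guessing which case holds for each of the at most $5k-1$ chains costs $2^{5k-1}$, and collapsing the single-side chains leaves a tree on at most $5k-1$ taxa, hence $10k-4$ edges, over whose subsets one brute-forces. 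This yields $\OhOf{2^{5k}\cdot 2^{10k}\cdot\poly{n}}$ with no need for an exhaustive local branching rule. If you want to pursue your approach, the chain-collapsing argument is precisely the missing ingredient that bounds the candidate cuts; without it, neither the exhaustiveness of the branching nor the constant branching factor can be established for three trees.
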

\begin{proof}
\leo{Consider an arbitrary input tree~$T\in\T$. Observe that an AAF of~$\T$ with $k'+1$ components can be obtained from~$T$ by deleting exactly~$k'$ edges and taking the partition of the taxa induced by the resulting connected components. By Lemma~\ref{lem:network-to-aaf}, the deletion AAF of an optimal hybridization network for~$\T$ has at most~$k+1$ components. The goal therefore is to locate the at most~$k$ edges that need to be deleted from~$T$ in order to obtain the deletion AAF of some optimal hybridization network for~$\T$.}

\leo{Let~$N$ be a hybridization network for~$\T$ with hybridization number~$k$ and consider its underlying generator, which is a $k$-reticulation generator and hence has at most~$k$ node sides and at most~$4k-1$ edge sides~\cite{vanIerselLinz}. It follows that there are at most $5k-1$ common chains of~$\T$, because any two taxa on the same edge side of~$N$ are in the same common chain~\cite{vanIerselLinz}. The} set of common chains is unambiguously defined by the set of input trees, can be computed in polynomial time, and no two chains can share a taxon. \leo{Moreover,} in~\cite{vanIerselLinz} it is proven that if two or more taxa of a common chain are on a single edge side of the underlying generator, the entire chain can safely be moved onto that edge side. That is, the new network still displays the input trees and has hybridization number no higher than the old network.

This means that we can assume the existence of an optimal network $N'$ such that for each common chain there are exactly two possibilities: (1) the chain is on a single \leo{side} of the underlying generator, or (2) each taxon of the chain is on a different side of the underlying generator. For each chain we can guess which of the two cases holds, using at most $2^{5k-1}$ guesses for the entire set of chains. \leo{Since, as mentioned before, any two taxa that are on the same side belong to a common chain, it follows that each side of (the underlying generator of)~$N'$ contains a complete case-1 chain, a single taxon (which is either in a case-2 chain or a singleton-chain) or no taxa at all.}

Now, assume that we have identified the correct set of guesses describing the behaviour of the common chains in~$N'$. It remains to show that we can identify the correct set of edges to delete in~$T$ to obtain the deletion AAF corresponding to $N'$. Observe that for each case-1 chain it is not necessary to delete any of the internal edges of the chain in~$T$. This is because we have correctly identified that the entire chain is attached to a single edge side of the generator and thus that it belongs to a single component of the deletion AAF. For each of the other edges in~$T$ we simply guess whether to delete it or not. Fortunately, there are not too many of these edges. Specifically, recall that each side contains either a case-1 chain or a single taxon, and that the number of sides is at most $5k-1$. Hence, if we collapse each case-1 chain~$C$ into a new taxon $x_{C}$, which is permitted because we will never cut its internal edges, there are in total at most $5k - 1$ taxa left. A binary tree with $5k-1$ taxa has $10k - 4$ edges. By guessing for each of these edges whether or not to delete it, we observe that, in total, the deletion AAF of~$N'$ can be located in time at most $\OhOf{ 2^{5k} \cdot 2^{10k} \cdot \poly{n} }$.
\end{proof}

\section{\revA{Tight} networks}

\label{sec:network-to-tight-network}

In this section we give the only lemma that is specific to three trees. We describe a transformation from a hybridization network to a structure, called a \revA{tight} network with embedded trees, that has some desirable properties. We prove that the transformation preserves the hybridization number, so we are allowed to concentrate on \revA{tight} networks in the case of three trees. \leo{For ease of notation, we will sometimes identify a directed graph with its edge set.}

A \emph{\revA{Tight} Network with Embedded Trees (\revA{TNET})} for a set $\T$ of phylogenetic trees over a label set $X$ is a pair $\H = (H, \E)$ with the following properties:
\begin{enumerate}[label=(\roman{*}),leftmargin=*,widest=viii,noitemsep]
\item $H$ is a DAG. We call its sources \emph{roots} and its sinks \emph{leaves}.\label{cond:acyclic-network}
\item Every root of $H$ has one child.\label{cond:non-splitting-roots}
\item The leaves of $H$ are labelled bijectively with the leaf labels in 
  $X$.\label{cond:leaf-labelling}
\item $\E = \set{H(T) \subseteq H \mid T \in \T}$ and, for all $T \in \T$,
  $H(T)$ is an \emph{image} of $T$, that is, $T$ can be obtained from $H(T)$
  by suppressing degree-$2$ nodes.\label{cond:displays-trees}
\item Every tree image $H(T) \in \E$ contains an edge incident to a root of
  $H$.\label{cond:start-at-the-root}
\item $H = \bigcup_{T \in \T}H(T)$, that is, every edge of $H$ belongs to at
  least one tree image in $\E$.\label{cond:no-useless-edges}
\item Every \leo{non-leaf} non-root node of $H$ has exactly two
  children.\label{cond:bifurcating-nodes}
\item For every \leo{non-leaf} non-root node, there exists a tree image
  $H(T) \in \E$ that contains both its child
  edges.\label{cond:every-node-splits}
\end{enumerate}

We represent the tree images in $\E$ by associating a unique colour with each tree $T \in \T$ and colouring every edge in $H(T)$ with this colour. We call the colour associated with tree $T$ \emph{colour $T$}. We use $C(e)$ to denote the colour set of an edge $e$ of $H$, that is, the set of trees $T \in \T$ whose images $H(T) \in \E$ include $e$. A \revA{TNET} for the input trees in Figure~\ref{fig:reconstruction-input} on page~\pageref{fig:reconstruction-input} is shown in Figure~\ref{fig:reconstruction-output} on page~\pageref{fig:reconstruction-output}. A corresponding hybridization network is shown in Figure~\ref{fig:reconstruction-final} on page~\pageref{fig:reconstruction-final}.

The hybridization network \emph{induced} by a \revA{TNET}~$(H', \E)$ is the hybridization network obtained by applying the following transformations to $H'$:
  \begin{itemize}[noitemsep]
  \item We replace every node $x$ that is both a reticulation node and a split
    node with two nodes $x_t$ and~$x_b$, change the bottom endpoints of $x$'s
    parent edges to $x_t$, change the top endpoints of $x$'s child edges to
    $x_b$, and add an edge from $x_t$ to $x_b$.
  \item We replace every reticulation node with more than two parents with a
    chain of binary reticulation nodes. \leo{See Figure~\ref{fig:nodesplit} for an illustration of these first two steps.}
  \item As long as there are at least two roots, we choose two such roots $r_1$
    and $r_2$, change the top endpoint of $r_1$'s child edge to $r_2$, and add
    an edge from $r_1$ to $r_2$. 
		This reduces the number of roots by one, so we eventually obtain a network
    with a single root. See Figure \ref{fig:rootjoin}.
\item For every leaf $x$ with more than one parent, we create a new
    node $x'$, change the bottom endpoint of every parent edge of $x$ to $x'$,
    and add an edge from $x'$ to $x$.
  \end{itemize}
	
\begin{figure}[t]
	\centering
		\includegraphics{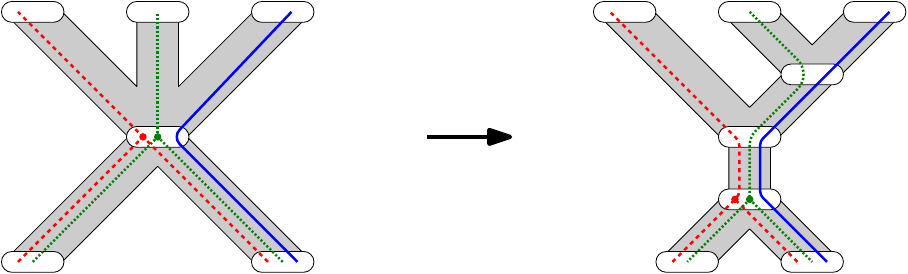}
	\caption{The first two steps of transforming a \revA{TNET} into a hybridization network: expanding nodes that are both reticulation and split nodes and refining reticulations.}
	\label{fig:nodesplit}
\end{figure}

\begin{figure}[t]
	\centering
		\includegraphics{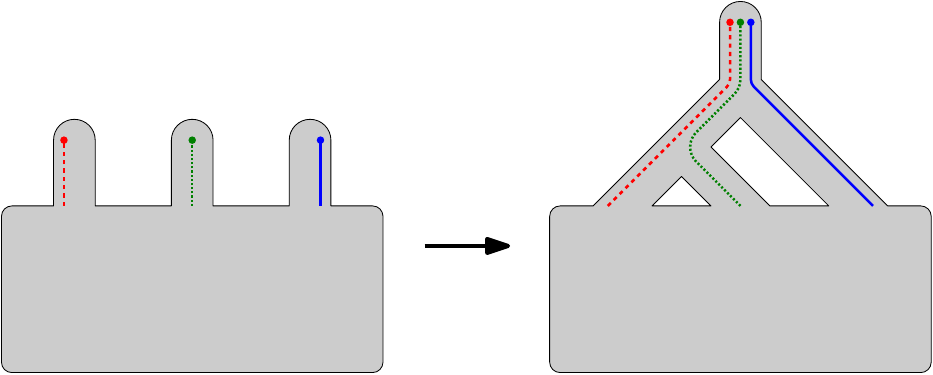}
	\caption{The third step of transforming a \revA{TNET} into a hybridization network: combining multiple roots.}
	\label{fig:rootjoin}
\end{figure}

The \emph{deletion AAF} of a \revA{TNET}~$(H, \E)$ is defined to be the deletion AAF of the hybridization network induced by~$(H, \E)$. \leo{The \emph{hybridization number} of a \revA{TNET}~$(H, \E)$ is (as for networks) defined to be the sum $\sum(d^-(v)-1)$ over all nodes~$v$ of~$H$ with indegree $d^-(v)$ at least~2.}

\begin{lem}
  \label{lem:multiroot-is-fine}
  If $\size{\T} = 3$, then there exists a hybridization network~$H$ with
  hybridization number $k$ for $\T$ if and only if there exists a
  \revA{TNET} $\H = (H', \E)$ with hybridization number $k$ for $\T$. Moreover, if~\revA{such a network~$H$ exists then there exists such a \revA{TNET} $\H$ with the same deletion AAF}.
\end{lem}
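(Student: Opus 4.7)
The plan is to prove the two directions of the equivalence together with the deletion-AAF claim.

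For the direction from CNET to hybridization network, I would verify that each of the four transformation steps defining the induced hybridization network preserves the quantity $\sum(d^-(v)-1)$ summed over nodes $v$ with $d^-(v) \geq 2$. Splitting a simultaneously reticulation-and-split node $x$ into $x_t$ and $x_b$ moves all of $x$'s indegree to $x_t$, while $x_b$ acquires indegree $1$ and contributes $0$; refining a reticulation of indegree $m$ into a chain of $m-1$ binary reticulations replaces a single contribution of $m-1$ by $m-1$ unit contributions; root-joining adds an edge $(r_1,r_2)$, giving $r_2$ indegree $1$ and contributing $0$; leaf-splitting transfers the indegree of a multi-parented leaf to a new reticulation $x'$. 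Each image $H(T) \in \E$ is refined only by the insertion of indegree-$1$-outdegree-$1$ nodes, so under suppression it still represents $T$ and the induced network displays $\T$. The deletion AAF is invariant because none of the operations alters which taxa share a connected component after reticulation edges are removed.

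For the direction from hybridization network to CNET, fix a hybridization network $H$ with hybridization number $k$. Pick for each $T \in \T$ an embedding $H(T) \subseteq H$ of $T$ in $H$, let $H'$ be their union, and set $\E := \setL{H(T) \mid T \in \T}$. Conditions (i), (iii), (iv), (v), and (vi) are immediate from the construction. Condition (ii) holds because each embedded tree root contributes a single outgoing edge to its source; condition (vii) holds because $H$ is binary and every non-leaf non-root node $v$ of $H'$ has outdegree $2$ in $H$, with both of its child edges used by at least one embedding. Condition (viii), the crucial one, relies on the three-tree-specific claim that $H$ can be chosen so that at each such split node $v$ both child edges lie together in some single image $H(T)$. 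Hybridization number is preserved because when $H$ is optimal no incoming reticulation edge can be unused by every embedding (otherwise its removal would yield a smaller hybridization network, contradicting optimality), and the deletion AAF matches that of $H$ because the induced hybridization network of the constructed CNET faithfully recovers the reticulation structure of $H$.

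The main obstacle is establishing the three-tree-specific structural claim underpinning condition (viii): that an optimal hybridization network on three trees can be chosen so that every split node is witnessed by a split of at least one of the three input trees. This is precisely the step flagged in the introduction as not generalising to four or more input trees, and it contains the essential combinatorial content of the lemma; the surrounding argument is a routine verification of the CNET conditions and of invariance under the four induced-network transformation steps.
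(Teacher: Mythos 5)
Your proposal correctly locates where the difficulty lies, but it does not prove the lemma: the ``three-tree-specific structural claim underpinning condition (viii)'' that you defer to is not a fact you may cite --- it \emph{is} the content of this lemma, and the paper's proof spends essentially all of its effort establishing it. The paper's argument starts from the pair $(H,\E)$ built from arbitrary embeddings containing the root (as you do), first enforces conditions (vi) and (vii) by deleting every edge outside $\bigcup_{T}H(T)$ and contracting the child edge of any non-root node left with a single child, and then eliminates \emph{fake split nodes} (nodes none of whose two child edges lie together in a single image) one at a time. The key three-tree observation is local: pick a fake split node $x$ all of whose parents are roots or true split nodes; the colour sets of its two child edges are disjoint and there are only three colours, so one child edge $e$ is monochromatic, say $C(e)=\{T\}$. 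One can then detach $e$ and re-attach it higher up, on or above the $T$-coloured parent edge of $x$'s parent $y$, with separate cases according to whether $y$ is a root, a $T'$-split node for some $T'\neq T$, or only a $T$-split node. Termination is not obvious --- in the last case the transformation turns the true split node $y$ into a new fake split node --- and is handled by a potential $\Phi=\sum_x\phi_x$, where $\phi_x$ is one plus the maximum weight of a root-to-$x$ path (weight counting, at each node, the number of trees in which it is a true split), which strictly decreases at every step. None of this machinery appears in your proposal, so the essential combinatorial content of the lemma is missing.

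Two smaller points. In your construction you assert condition (vii) holds because every non-leaf non-root node of $H'$ has ``both of its child edges used by at least one embedding''; this is false for a raw union of embeddings (a node of $H$ may have only one child edge covered), which is precisely why the paper contracts such edges before attacking condition (viii). And the ``Moreover'' clause requires verifying that the deletion AAF survives all of the above transformations --- in particular the fake-split-node eliminations, one case of which modifies an edge used by all three embeddings --- which your argument cannot address because it never performs the transformations.
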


\begin{proof}
  First suppose that there exists a \revA{TNET} $\H = (H', \E)$ with hybridization number $k$ for $\T$. Then the hybridization network~$H$ induced by~$\H$ has the same hybridization number as $H'$, and it is easy to see that $H$ displays the trees in $\T$,   given that $H'$ displays these trees. It follows directly from the definition of the deletion AAF of a \revA{TNET} that~$H$ and~$\H$ have the same deletion AAF.

  Now assume we are given a hybridization network $H$ with hybridization number $k$ for $\T$. Since $H$ displays all trees in $\T$, we can choose a tree image $H(T)$, for every tree $T \in \T$, that includes the root of~$H$. Then we set $\H = (H, \E)$. $\H$ satisfies conditions~\ref{cond:acyclic-network}--\ref{cond:start-at-the-root} of a \revA{TNET} but may violate the remaining three conditions. Next we describe transformations that we apply to $\H$ to ensure it satisfies these remaining conditions without introducing any violations of the conditions $\H$ already satisfies and without increasing the hybridization number of $\H$. Thus, after applying these transformations, we obtain a \leo{\revA{TNET}} with hybridization number at most $k$ for~$\T$.

  \paragraph{Condition~\ref{cond:no-useless-edges}.}

  Deleting all edges of $H$ that are not contained in $\bigcup_{T \in \T}H(T)$
  does not violate
  conditions~\ref{cond:acyclic-network}--\ref{cond:start-at-the-root}\comment{LI}{Note that this cannot create new leaves because then the edge entering the new leaf would also have been deleted.} and
  establishes condition~\ref{cond:no-useless-edges}.
  Since it also does not increase the hybridization number of~$\H$, we obtain
  a network with hybridization number at most $k$ that satisfies conditions
  \ref{cond:acyclic-network}--\ref{cond:no-useless-edges}.

  \paragraph{Condition~\ref{cond:bifurcating-nodes}.}

  As long as there is a \emph{non-splitting} non-root node $x$, that is, a
  non-root node with only one child \revA{(which may arise after the modifications from the previous paragraph)}, we contract the edge $e$ between $x$ and
  its child in $H$ and in every tree image $H(T) \in \E$ that includes $e$, and
  merge any parallel edges this may create. Each such contraction reduces the number of non-splitting non-root nodes by
  one, does not introduce any violations of
  conditions~\ref{cond:acyclic-network}--\ref{cond:no-useless-edges}, and does
  not increase the hybridization number of~$\H$.
  Thus, we eventually obtain a network with hybridization number at most $k$
  that satisfies
  conditions~\ref{cond:acyclic-network}--\ref{cond:bifurcating-nodes}.

  \paragraph{Condition~\ref{cond:every-node-splits}.}

  By condition~\ref{cond:bifurcating-nodes}, every non-root node of $H$ is a
  split node.
  We call it a \emph{true split node} if it also satisfies
  condition~\ref{cond:every-node-splits}, and a \emph{fake split node}
  otherwise.
  We also call a true split node \leo{a \emph{$T$-split node} if the tree image~$H(T)$ contains both its child edges.}
  The \emph{weight} of a node $x$ is the number of trees $T \in \T$ such that
  $x$ is a $T$-split node. The \emph{weight} of a path is the sum of the weights of the nodes on the path. Now we define the \emph{potential} $\phi_x$ of a fake split node to be one plus the maximum weight of a path from a root to $x$. \leo{All} other nodes have potential $0$.
  The potential of the network is $\Phi := \sum_x \phi_x$, where the sum is
  taken over all nodes of the network.
  Since every fake split node has a positive potential, a network has potential $\Phi = 0$ if and only if it contains no fake split node, that is, if and only if it satisfies condition~\ref{cond:every-node-splits}.
  Next we describe transformations that decrease the potential of the network
  without increasing its hybridization number or introducing any violations
  of conditions~\ref{cond:acyclic-network}--\ref{cond:bifurcating-nodes}.
  Thus, \leo{after repeating this transformation as often as possible, we} obtain a network with
  hybridization number at most $k$ and which satisfies
  conditions~\ref{cond:acyclic-network}--\ref{cond:every-node-splits}, so it
  is a \revA{TNET} with hybridization number at most $k$
  for~$\T$.

  While the network contains fake split nodes, there exists such a node $x$ all
  of whose parents are true split nodes or roots.
  (Simply remove all roots, true split nodes, and leaves from $H$ and choose
  $x$ to be an in-degree-$0$ node of the resulting subgraph of $H$.)
  Since the colour sets of $x$'s child edges are disjoint and $x$ has two child
  edges, one of these edges, $e$, must have exactly one colour:
  $C(e) = \set{T}$, for some $T \in \T$.\footnote{This is the only place in the
    entire paper where we use that $\size{\T} = 3$.
    All other arguments are easily seen to generalize to more than $3$ trees.
    Alas, this argument is crucial because our algorithm does not work without
    condition~\ref{cond:every-node-splits}.}
  Let $f$ be $x$'s parent edge that has colour $T$.
  By the choice of $x$, the top endpoint $y$ of $f$ is a root or a true split
  node.

  \begin{figure}[t]
    \centering
    \includegraphics{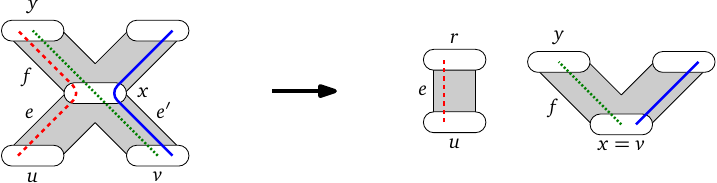}
    \caption{Eliminating a fake split node $x$ below a root $y$.}
    \label{fig:root-transformation}
  \end{figure}

  If $y$ is a root (see Figure~\ref{fig:root-transformation}), we remove $T$
  from the colour set of $f$, create a new root
  node $r$, change $e$'s top endpoint to $r$, remove $f$ and its top
  endpoint\leo{~$y$ if the colour set of~$f$} is now empty and restore condition~\ref{cond:bifurcating-nodes}.
  It is easily verified that this does not introduce any violations of
  conditions~\ref{cond:acyclic-network}--\ref{cond:no-useless-edges} and
  does not increase the hybridization number of the network.
  \leo{It also does not increase the potential of any node, and reduces the number of fake split nodes by one. The} potential of the network therefore decreases.

  \begin{figure}[t]
    \centering
    \subfigure[\unskip]{\includegraphics{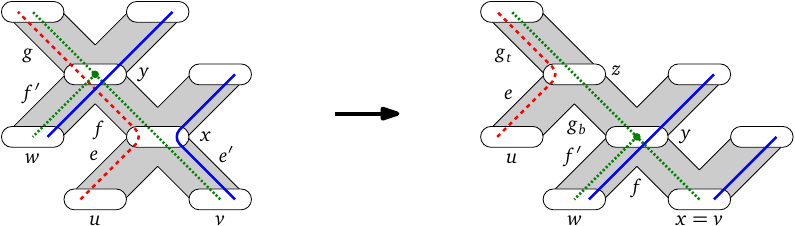}}
    \subfigure[\unskip]{\includegraphics{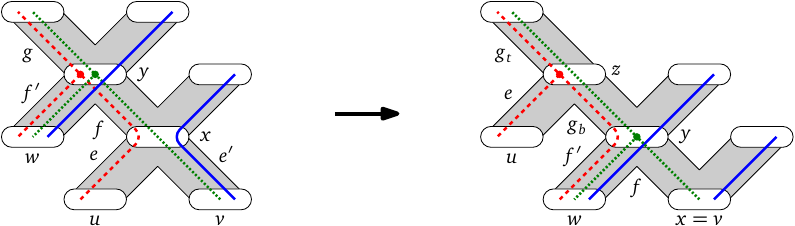}}
    \caption{Eliminating a fake split node $x$ below a split node $y$ \leo{that is a~$T'$-split node with~$T'$ not equal to} the colour~$T$ \leo{(red, dashed)} of the monochromatic edge below $x$.
      (a) $y$ is not a $T$-split node.
      (b) $y$ is a $T$-split node.}
    \label{fig:foreign-split-transformation}
  \end{figure}

  If $y$ is a true split node, we distinguish two cases.
  The first case is that~$y$ is a $T'$-split node, for some $T' \ne T$. This case is illustrated in Figure~\ref{fig:foreign-split-transformation}. Figure~\ref{fig:foreign-split-transformation}(a) depicts the subcase when~$y$ is not \leo{also} a~$T$-split node while Figure~\ref{fig:foreign-split-transformation}(b) depicts the subcase when~$y$ is also a~$T$-split node. Both cases can be handled in a similar way.

  Let $g$ be the
  parent edge of $y$ whose colour set includes $T$, and let $f'$ be $f$'s
  sibling edge.
  We divide $g$ into two edges $g_t$ and $g_b$, with $g_t$ above $g_b$, and
  denote their common endpoint by $z$.
  We remove $T$ from the colour set of $f$, set $C(g_t) := C(g)$ and
$C(g_b) := C(g) \setminus \set{T}$ if $T\notin C(f')$ and $C(g_b) := C(g)$ if $T\in C(f')$, change the top endpoint of $e$
  to $z$, remove all edges whose colour sets are now empty (this can only be $f$
  and $g_b$), and finally restore condition~\ref{cond:bifurcating-nodes}.
  Again, it is easily verified that this does not introduce any violations of
  conditions~\ref{cond:acyclic-network}--\ref{cond:no-useless-edges} and
  does not increase the hybridization number of the network.
  It also does not increase the potential of any node and eliminates $x$ from the network (because $x$ has only one child edge~$e'$ after changing $e$'s top endpoint and hence~$e'$ is being contracted).
  The only new node is $z$.
  If $T \in C(f')$, then $z$ is a true split node and its contribution to the network's potential is~$0$.
  Thus, since~$x$ is eliminated from the network, the network's potential decreases.
  If $T \notin C(f')$, then $z$ is a fake split node.
  However, its potential $\phi_z$ is less than $\phi_x$ \leo{because for every path from a root to~$z$ in the modified network, there exists a corresponding path from this root to~$x$ in the original network that has a greater weight because it contains the true split node~$y$.}
  Thus, once again, the potential of the network decreases.

  \begin{figure}[t]
    \centering
    \includegraphics{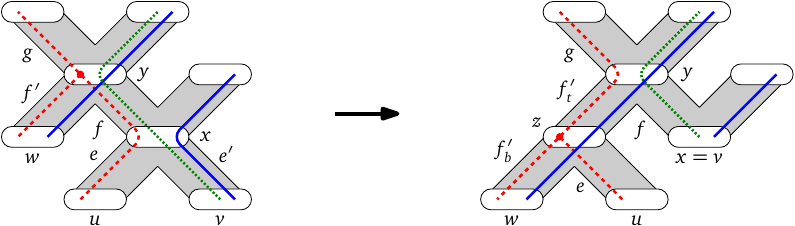}
    \caption{Eliminating a fake split node~$x$ below a split node $y$ \leo{that is not a~$T'$-split node for any~$T'$ that is not equal to the colour~$T$ (red, dashed) of the monochromatic edge below $x$.}}
    \label{fig:self-split-transformation}
  \end{figure}

  Finally, if $y$ is not a $T'$-split node for any $T' \ne T$ (see
  Figure~\ref{fig:self-split-transformation}), it must be a
  $T$-split node. As before, let $g$ be the
  parent edge of $y$ whose colour set includes $T$, and let $f'$ be $f$'s
  sibling edge. We subdivide~$f'$ into~$f'_t$ and~$f'_b$ where~$f'_t$ is above~$f'_b$ and let~$z$ be the newly created node. We change the top endpoint of~$e$ to~$z$, remove~$T$ from the colour set of~$f$, remove all edges whose colour sets are now empty, and restore condition~\ref{cond:bifurcating-nodes}. This transformation maintains conditions
  \ref{cond:acyclic-network}--\ref{cond:no-useless-edges} and does not
  increase the hybridization number of the network.
  It eliminates $x$ from the network (because it has only one child edge~\leo{$e'$} after
  changing the top endpoint of~\leo{$e$}) and makes~$y$ a fake split node.
  However, the potential of~$y$ in the modified network is less than the potential of~$x$ in the original network. To see this, let~$P$ be a path of maximum weight from a root to~$y$. Then the potential of~$y$ in the modified network is one plus the weight of~$P$. In the original network, the path~$P$ extended by the edge~$(y,x)$ is then a path from a root to~$x$, and its weight is one higher than the weight of~$P$ \leo{because it contains the true split node~$y$}. Hence, the potential of~$x$ in the original network is at least one higher than the potential of~$y$ in the modified network. Since the potential of all other nodes remains the same or decreases, the potential of the network decreases.
  
  Let $\H = (H', \E')$ be the \revA{TNET} eventually obtained by the above transformations and let~$H$ be the original hybridization network. It remains to show that~$\H$ and~$H$ have the same deletion AAF.
  \revA{To this end, observe that, if $F$ is a component of the deletion AAF of $H$ and $x, y$ are
    two taxa in $F$, then a network
    edge $e$ belongs to the path from $x$ to $y$ in the image $H(T)$ of some tree
    $T \in \T$ if and only if it belongs to the path from $x$ to $y$ in every image
    $H(T')$, $T' \in \T$.
    Now it suffices to verify that, in each of the transformations in Figures~\ref{fig:root-transformation}--\ref{fig:self-split-transformation}, none of the edges that
    are destroyed or created belongs to all three tree images, so each transformation leaves the deletion AAF unchanged.}
\end{proof}

By Lemma~\ref{lem:multiroot-is-fine}, it is sufficient if our algorithm
can construct a \revA{TNET} of the three input trees.
In the next section, we show how to do this.


\section{Reconstructing a \revA{tight} network}

\label{sec:guessing-things}

Let $\H = (H, \E)$ be a \revA{TNET} for $\T$, let
$F$ be its deletion AAF, and let $k$ be its hybridization number.
For each tree $T \in \T$, let $I(T)$ be the set of nodes in $T$
that do not belong to any path between two leaves~$x$ and~$y$ in the same AAF
component (considering the root as a leaf).
In a sense, these nodes are ``invisible'' in $F$.
The \emph{extended AAF} $F^*$ of $\T$ is defined as
$F \cup I$, where $I := \bigcup_{T \in \T} I(T)$. \leo{We will refer to the elements of~$F^*$ as \emph{components}. Let~$C$ be a component of~$F^*$. Hence,~$C$ is either an AAF component or a node in~$I$. If~$C$ is an AAF component, then~$r_C$ denotes the root of~$C$. If~$C$ is a node of~$I$, then~$r_C$ is equal to the node~$C$. In either case, we will refer to~$r_C$ as the \emph{root} of~$C$.}

By the following lemma, the size of~$I$ is at most~$3(k-1)$ if~$\T$ contains at most three trees.

\begin{lem}\label{lem:invisiblenodes}
For any~$T\in\T$, $|I(T)| \leq k-1$.
\end{lem}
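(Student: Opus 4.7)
The plan is a straightforward edge-and-degree count on a contraction of~$T$. Let $c$ be the number of AAF components, so by Lemma~\ref{lem:network-to-aaf} we have $c\le k+1$. The only interesting case is $c\ge 2$ (equivalently $k\ge 1$): when $c=1$ the single AAF component spans every leaf of~$T$ and therefore covers every internal node, giving $I(T)=\emptyset$, and the bound holds trivially.

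For each AAF component $F_i$, let $T_i$ denote the minimal subtree of~$T$ spanning $L(F_i)$ (with no suppression of degree-$2$ nodes). The $T_i$ are pairwise node-disjoint connected subgraphs of~$T$ by the AAF definition, and together with $I(T)$ their node sets partition $V(T)$. Every leaf of~$T$ (including the root~$\rho$, treated as a leaf) lies in some $L(F_i)$, so $I(T)$ contains only internal nodes of~$T$, each of total degree~$3$.

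Next I would contract each $T_i$ in~$T$ to a single node~$v_i$, obtaining a graph $\tilde T$ on $\{v_1,\ldots,v_c\}\cup I(T)$. Because $T$ is a tree and the $T_i$ are node-disjoint connected subtrees, any self-loop or parallel edge after contraction would close a cycle in~$T$; hence $\tilde T$ is a tree with $|V(\tilde T)|=c+|I(T)|$ and $|E(\tilde T)|=c+|I(T)|-1$. For the degrees, each $v\in I(T)$ keeps its three incident edges intact (no two of them can end in the same~$T_i$, again by the cycle argument), so $\deg_{\tilde T}(v)=3$; and since $c\ge 2$ makes $\tilde T$ a tree on at least two nodes, every~$v_i$ has $\deg_{\tilde T}(v_i)\ge 1$. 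The handshake lemma now gives
\[
2\bigl(c+|I(T)|-1\bigr)
\;=\;3\,|I(T)|+\sum_{i=1}^{c}\deg_{\tilde T}(v_i)
\;\ge\;3\,|I(T)|+c,
\]
which rearranges to $|I(T)|\le c-2\le k-1$.

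The only mildly delicate step is justifying that the contraction really yields a tree, i.e. that no self-loop or parallel edge is created; this is where the acyclicity of~$T$ together with the node-disjointness of the $T_i$ is used, and everything else is mechanical counting.
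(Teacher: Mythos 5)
Your proof is correct, and although it establishes the same intermediate bound as the paper --- namely $|I(T)| \le c-2$ with $c$ the number of AAF components, followed by $c \le k+1$ via Lemma~\ref{lem:network-to-aaf} --- it reaches that bound by a genuinely different count. The paper works with the connected components $\C$ of the subgraph of $T$ induced by $I(T)$: each such component $C$ has $|C|+2$ boundary edges, the definition of $I(T)$ forces all but at most one of them into the deletion set $E^*$, and correcting for double counting gives $|E^*| \ge |I(T)|+1$. You instead contract the node-disjoint spanning subtrees $T(L(F_i))$ and apply the handshake lemma to the resulting tree, using only that $I(T)$ is exactly the set of degree-$3$ nodes lying outside these subtrees; this is the standard fact that a tree has at most $\ell-2$ degree-$3$ nodes when the other $\ell$ nodes have degree at least $1$. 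Your route is somewhat more elementary: it never mentions the deletion set $E^*$ and sidesteps the paper's delicate claim that at most one edge of $\delta(C)$ survives deletion. Two small remarks: the parenthetical ``equivalently $k\ge 1$'' is not quite accurate ($c\ge 2$ implies $k\ge 1$ but not conversely), though harmless since you case-split on $c$; and in the case $c=1$ the bound $0\le k-1$ silently assumes $k\ge 1$ --- but the paper's own proof makes the same implicit assumption (its formula also asserts $|E^*|\ge 1$ when $I(T)=\emptyset$), so this is not a gap relative to the paper.
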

\begin{proof}
Let~$T\in\T$. Since the root of~$T$ is a leaf after omitting directions, we can see~$T$ as an unrooted tree. Since~$F$ is an AAF of~$\T$ and~$T\in\T$, we know that~$F$ can be obtained by deleting a set~$E^*$ of~$k$ edges from~$T$ and then taking the partition of the leaves induced by the resulting connected components. Let~$\C$ be the partition of the nodes of~$I(T)$ induced by the connected components of the subgraph of~$T$ \revA{containing the vertices of~$I(T)$ and all edges between them that are not in~$E^*$. See Figure~\ref{fig:invisiblenodes} for an example}. For each~$C\in\C$, let~$\delta(C)$ denote the set of edges with exactly one endpoint in~$C$. Then, at most one of the edges in~$\delta(C)$ is not contained in~$E^*$ \revA{because otherwise at least one vertex of~$C$ would be on a path between two vertices in the same AAF component, which is not possible by the definition of~$I(T)$}. Hence, at least $|C| + 1$ of the~$|C|+2$ edges in~$\delta(C)$ are in~$E^*$. Moreover, since~$T$ is a tree, at most~$|\C|-1$ edges can be in $\delta(C)\cap\delta(C')$ for two different~$C,C'\in\C$. Hence,
\[
|E^*| \geq \sum_{C\in\C} (|C|+1) - (|\C|-1) = |I(T)| + 1.
\]
Since~$|E^*|=k$, the lemma follows.
\end{proof}

\begin{figure}[t]
  \centering
  \includegraphics{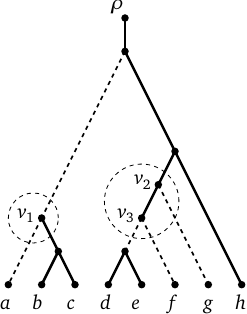}
  \caption{\revA{Example for the proof of Lemma~4. The set~$E^*$ consists of the $k=5$ dashed edges. The two elements of~$\C$ are indicated by dashed circles.}}
  \label{fig:invisiblenodes}
\end{figure}

We \leo{will} construct $\H$ from $F^*$ and $\T$ with the help of a
\emph{guess} of the structure of $\H$.
In particular we construct $\H$ by ``gluing together'' the components
of $F^*$. Our guess concerns how this gluing is to be done.
\revA{Under the embedding of~$\T$ in~$H$ described by~$\E$,} the root $r_C$ of every component $C \in F^*$ has a unique image $H(r_C)$ in
$H$:
If $C \in F$, this is true because $F$ is the deletion AAF of $\H$.
If $r_C = C \in I(T)$, for some $T \in \T$, $r_C$ is a split node of
\leo{$T$ and, thus, has a unique image $H(r_C)$ in $H(T)$ which is also a node of~$H$}.
Our guess for $r_C$ defines the ``wiring'' of the in-edges of $H(r_C)$;
we call it the \emph{wiring guess} for~$r_C$.
Since the colour sets of the parent edges of every node in $H$ are disjoint,
$H(r_C)$ has between one and three parent edges.
The first part of the wiring guess for $r_C$ is the number of parent edges of
$H(r_C)$.
The second part of the wiring guess for $r_C$ is which of these in-edges is
included in which tree image.
Finally, observe that the top endpoint~$x$ of each parent edge of~$H(r_C)$ must
once again be a $T'$-split node, for at least one $T' \in \T$.
The third part of the wiring guess for $r_C$ determines such a tree $T'$ for
each parent edge of $H(r_C)$. \leo{We will assume without loss of generality that any two nodes~$r_C,r_{C'}$ for which $H(r_C)$ and $H(r_{C'})$ have a common parent~$x$ both guess the same tree~$T'$ as the tree for which~$x$ is a~$T'$-split node.}

\leo{First consider a component root~$r_C$ that is a node in~$I(T)$ for some~$T\in\T$.}
Note that (i) at least one parent edge of $H(r_C)$ must have colour $T$ because
$H(r_C)$ is \leo{a non-root node} of $H(T)$, and (ii) the top endpoint
of a parent edge $e$ of $H(r_C)$ can be a $T'$-split node only for trees $T'$
such that $T' \in C(e)$.
This gives the~$17$ possible wiring guesses for~$r_C$ shown in Figure~\ref{fig:guesses}.

If $r_C$ is an AAF root, the set of possible wiring guesses is more restricted.
Since $H(r_C)$ is contained in every tree image, the only valid wiring guesses
are the ones where the union of the colour sets of the parent edges of $H(r_C)$
contains all three colours. This reduces the number of possible wiring guesses for AAF roots to $10$ \leo{(see again Figure~\ref{fig:guesses})}. \leo{Finally note that, when~$r_C$ is the root~$\rho$ of the trees, there is only a single wiring guess.}

Our \emph{guess} $\G$ of the structure of $\H$ consists of the wiring guesses for all roots $r_C$, $C \in F^*$.

\begin{figure}[t]
  \centering
  \includegraphics{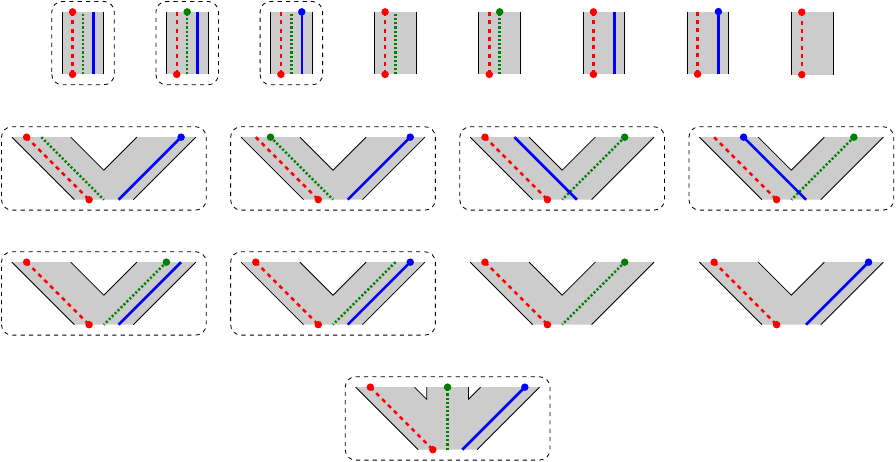}
  \caption{The 17 possible wiring guesses for \leo{the root~$r_C$ of a component~$C$ of the extended AAF~$F^*$ that is a node in~$I(T)$, with~$T$ the red dashed tree}.
    The guess of the split node at the top of each edge is indicated by a dot.
    Valid guesses for a root~$r_C$ of an AAF component are indicated by dashed boxes around them.}
  \label{fig:guesses}
\end{figure}

Since we have $17$ wiring guesses to choose from for each component in $I$,
and $10$ wiring guesses to choose from for each component in $F$ that does not
contain the tree root $\rho$, there are $10^{\size{F}-1} \cdot 17^{\size{I}}$
possible guesses $\G$.
Since $\size{F} \le k+1$ and $\size{I} \le 3(k-1)$ by Lemma~\ref{lem:invisiblenodes}, the number of possible guesses $\G$ is bounded
by $10^k \cdot 17^{3(k-1)} = 49130^k / 4913$.

Our algorithm considers each guess \leo{in}~$\G$ in turn and attempts to construct a
\revA{TNET} $\H$ from $(\G, F^*, \T)$ in polynomial time.
We call $(\G, F^*, \T)$ the \revA{TNET}'s \emph{description}.
To establish our algorithm's correctness, we prove that, given the description
$(\G, F^*, \T)$ of a \revA{TNET} $\H$ for $\T$, our
algorithm succeeds in constructing a \revA{TNET} $\H'$ for
$\T$ with description $(\G, F^*, \T)$ that differs from $\H$ only in
insignificant details and in particular has the same hybridization number
as~$\H$. Our proof is divided into two lemmas. The first one shows that two \revA{TNET}s with the same description are essentially
the same.
We make this precise below.
The second one shows that, given the description of a \revA{TNET}, we can construct a \revA{TNET} with this description in polynomial time.
Note that not every description $(\G, F^*, \T)$ is necessarily a valid
description of any \revA{TNET}.
\leo{If there is no \revA{TNET} with the given description, then our algorithm will determine this in polynomial time.} The description of this algorithm is given in the proofs of
Lemmas~\ref{lem:unique-network} and~\ref{lem:network-from-description} below.
Appendix~\ref{sec:example} provides an example of the operation of the
algorithm.

Given a \revA{TNET} $\H = (H, \E)$ for $\T$, we obtain a
DAG $\tilde H$ by contracting the image of every component of the
deletion AAF $F$ of $H$ into a single node, keeping parallel edges this
creates.
We call $\tilde H$ the \emph{signature} of $H$. \leo{For an example, see Figure\revA{s}~\ref{fig:reconstruction-signature} \revA{and~\ref{fig:reconstruction-output}} in the appendix.}
It is easy to see that the node set of $\tilde H$ is
$\set{H(r_C) \mid C \in F^*}$ and that $\tilde H$ has the same hybridization
number as $H$. Note that two nodes~$C\in I(T)$, $C'\in I(T')$ might map to the same node~$H(r_C)=H(r{_C'})$ of~$\tilde H$.
We label every node $x$ of $\tilde H$ with the set of roots $r_C$, $C \in F^*$,
such that $x = H(r_C)$.
We call roots that label the same node of~$\tilde H$ \emph{buddies} (of each other). Note that roots of components of~$F$ have no buddies.


We will use the following notion of the \leo{``attached subtrees''} of an AAF component~$C$ in a tree~$T$. For each edge~$(u,v)$ of~$T$ for which~$u$ is contained in~$T(C)$ but~$v$ is not contained in~$T(C)$, we say that the subtree of~$T$ induced by~$u,v$ and all nodes reachable from~$v$ is \leo{a subtree of~$T$ \emph{attached} to~$C$.}
  
In addition, we will use the following notion of the ``pendant subtrees'' represented by an edge~\leo{$e=(H(r_C),H(r_{C'}))$ of the signature~$\tilde H$. For each~$T\in C(e)$, the subtree of~$T$ induced by~$r_{C'}$, the parent of~$r_{C'}$ and all nodes reachable from~$r_{C'}$ is} said to be a \emph{pendant subtree} of~$T$ represented by the edge~$e$ of~$\tilde H$.

\begin{lem}
  \label{lem:unique-network} \revA{The signature $\tilde H$ of a \revA{TNET} $\H = (H, \E)$ can be uniquely reconstructed in polynomial time from the description $(\G, F^*, \T)$ of $\H$.}
\end{lem}

\begin{proof}
  We define a DAG $D$ whose nodes are the roots of \leo{components of}~$F^*$.
  There is an edge from a root $r_C$ to a root $r_{C'}$ if and only if there
  exists a tree $T \in \T$ such that $r_{C'}$ is an ancestor of $r_C$ in $T$
  and there exists no component root $r_{C''}$ that is an ancestor of $r_C$ and
  a descendant of $r_{C'}$ in~$T$.
  If this is the case, we call $r_C$ a \emph{direct descendant} of $r_{C'}$ in $T$. (Note that the edge in~$D$ is directed from the descendant to the ancestor.)
  We incrementally construct a topological ordering of $D$, define $U_i$ to be the first~$i$ nodes in this topological ordering, $D_i$ to
  be the subgraph of $D$ induced by $U_i$, and $V_i := \{ H(r_C) \mid r_C \in U_i\}$.
  $\bar D_i$~is the subgraph of $D$ induced by all nodes of $D$ not in $U_i$. Let \leo{the partial signature} $\tilde H_i$ be the graph obtained as follows from the subgraph of $\tilde H$ induced by~$V_i$:
For each parent edge~$e$ in~$\tilde H$ of a node $H(r_C) \in V_i$ whose top endpoint does not belong to $V_i$, we adding a new root and an edge~$\tilde e$ from this new root to~$H(r_C)$ and give~$\tilde e$ the same colour set as~$e$ (which is determined by the wiring guess for $r_C$). \leo{For an example, see Figure~\ref{fig:reconstruction-signature} in the appendix.} We call an edge whose top endpoint is a root a \emph{root edge}.
	
  

  We assign these new roots colours, where the colour of the top endpoint
  of edge $\tilde e$ is $T$ if the bottom endpoint of $\tilde e$ guesses that the top endpoint of $e$ is a $T$-split node.
  
We will show that, from~$D_i, \tilde H_i$ and~$\G$, we can determine a unique nonempty set of nodes~$U'$ of $\bar D_i$, the set~$U^+$ of all buddies of nodes in~$U'$ and \leo{an extended partial signature}~$\tilde H_j$ corresponding to the graph~$D_j$ induced by~$U_j:=U_i\cup U' \cup U^+$.
  
  
  
  Once $D_i = D$, we thus obtain $\tilde H_i = \tilde H$, that is, we are able
  to reconstruct \leo{the signature} $\tilde H$ only given the description of~$\H$.
  
We initialize the reconstruction by defining $D_0$ and~$\tilde H_0$ to be empty \leo{digraphs}. Now consider an iteration with input \leo{digraph} $D_i$ and \leo{partial signature} $\tilde H_i$. 
  For an in-degree-$0$ node $r_C$ of $\bar D_i$ that belongs to $I(T)$, for some
  $T \in \T$, observe that both child edges of $H(r_C)$ in $\tilde H$ have corresponding root edges in $\tilde H_i$ (by the construction of $\tilde H_i$). For such a node~$r_C$, we call $r_C$ \emph{free} if the top endpoints of both of these root edges are coloured $T$.
  For an in-degree-$0$ node $r_C$ of $\bar D_i$ that is a root of \leo{a component of} $F$, we say that~$r_C$ is \emph{free}
  if, for every in-edge $e$ of $r_C$ in $D$, the corresponding root edge $\tilde e$ of $\tilde H_i$ has the property that, for every
  $T \in C(\tilde e)$, the \leo{pendant} subtree of $T$ represented by $\tilde e$ is \leo{attached to the AAF component $C$ in $T$}. All other nodes of $\bar D_i$ are non-free.
  We claim that at least one of the nodes in $\bar D_i$ is free and that every
  free node can determine its buddies in $\tilde H$ and can augment \leo{the partial signature} $\tilde H_i$
  to $\tilde H_j$.

  Consider a node $x$ of \leo{the signature} $\tilde H$ that does not belong to $\tilde H_i$ and all of whose children \leo{do} belong to $\tilde H_i$.
  Since $\tilde H$ is a DAG, such a node exists.
  Node $x$ is a $T$-split node, for some $T \in \T$, and is thus the image
  $H(r_C)$ of \leo{the root~$r_C$ of a component} in~$F \cup I(T)$.
  
  First consider the case that \leo{$C \in F$}. Observe that, because $\tilde H$ is the signature of a \revA{TNET} $\H$ for $\T$, the \leo{subtrees attached to} $C$ in all input trees are exactly the \leo{pendant} subtrees represented by the child
edges of $H(r_C)$ in $\tilde H$. Since these child edges are root edges of $\tilde H_i$, $r_C$ is free. Node $r_C$ can trivially identify its set of buddies because it has no buddies besides itself. We obtain~$D_j$ by adding $r_C$ to $D_i$ and obtain~$\tilde H_j$ from~$\tilde H_i$ by locating the root edges of $\tilde H_i$ that correspond to child edges of $H(r_C)$ in $\tilde H$, merging the top endpoints of these root edges to a single node~$H(r_C)$, and adding new root edges entering~$H(r_C)$ according to~$r_C$'s wiring guess.
  This gives a new graph $D_j \supset D_i$ and the corresponding graph
  $\tilde H_j$.

  Now consider the case that $r_C \in I(T)$. In this case, both child edges of~$r_C$ in $\tilde H$ correspond to root edges of $\tilde H_i$.
  Since these two edges have a common top endpoint in $\tilde H$, their
  top endpoints in $\tilde H_i$ must have the same colour $T'$.
  This implies that $H(r_C)$ is a $T'$-split node and is thus the image
  $H(r_{C'})$ of a node $r_{C'} \in I(T')$.
  Since the child edges of $H(r_{C'})$ in~$\tilde H$ belong to $\tilde H_i$, both
  in-neighbours of $r_{C'}$ in $D$ belong to~$D_i$.
  Thus, $r_{C'}$ is free.
  Since $H(r_C) = H(r_{C'})$, we may assume that $r_{C'} = r_C$ (because in at least one set of guesses this is the case). We obtain $\tilde H_j$ from~$\tilde H_i$ by locating the two root edges of~$\tilde H_i$ that correspond to child edges of $H(r_C)$ in $\tilde H$, merging the top endpoints of these edges to a single node~$H(r_C)$, and adding new root edges entering~$H(r_C)$ according to~$r_C$'s wiring guess.
  Now consider the child edges of $H(r_C)$ in $\tilde H_j$.
  For each tree $T' \in \T$ such that these edges are both coloured $T'$,
  $H(r_C)$ is the image $H(r_{C'})$ of a node in $I(T')$.
  The nodes of $\bar D_i$ that satisfy this condition are the buddies of
  $r_C$, and we add them to $D_i$ along with $r_C$ to obtain $D_j$.

  We have shown that every node $H(r_C)$ of $\tilde H$ whose children belong to
  $\tilde H_i$ corresponds to a free node $r_C$ in $\bar D_i$ and that $r_C$
  can determine its set of buddies and can construct the \leo{extended partial signature} $\tilde H_j$
  corresponding to the \leo{digraph} $D_j \supset D_i$ obtained by adding these buddies to $D_i$.
  If $r_C$ is an AAF root, it has no buddies besides itself.
  If $r_C \in I(T)$, for some $T \in \T$, observe that none of its buddies is
  free because it belongs to a tree $T' \ne T$ but the top endpoints of the
  child edges of $H(r_C)$ have colour $T$ in $\tilde H_i$.
  Thus, to prove that \emph{every} free node can determine its set of buddies
  and can construct $\tilde H_j$ from $\tilde H_i$, it suffices to show
  that there is no free node $r_C$ such that $H(r_C)$ has a child not in
  $\tilde H_i$.

  Assume there exists such a node $r_C$.
  \leo{Then} $r_C$ has in-degree $0$ in $\bar D_i$ because otherwise it
  is not free.
  First assume $r_C$ \leo{is the root of a component} in~$F$.
  Let $H(r_{C'})$ be an in-neighbour of $H(r_C)$ in $\tilde H$ that does not
  belong to~$\tilde H_i$, and let $T$ be an arbitrary colour $T$ in the
  colour set of the edge $e$ between $H(r_C)$ and $H(r_{C'})$ in~$\tilde H$.
  Since $r_C$ has in-degree~$0$ \leo{in $\bar D_i$}, the \leo{pendant} subtree of $T$ represented by $e$ is also represented by some root edge $f$ of $\tilde H_i$.
  Thus, $\tilde H$ contains a unique path from $e$ to $f$ and every node on this
  path is a $T'$-split node, for some $T' \ne T$.
  This implies in particular that the top endpoint of edge $f$ is a $T'$-split
  node, \leo{$T,T'\in C(f)$}, and the \leo{pendant} subtree of $T'$ represented by $f$ is not a \leo{subtree attached to $C$ in~$T'$}. Thus, since $f$ represents the same in-edge of $r_C$ in $D$ as $e$, $r_C$
  is not free, a contradiction.

  If $r_C \in I(T)$, for some tree $T \in \T$, we choose an in-neighbour
  $H(r_{C'})$ and a root edge $f$ of $\tilde H_i$ as in the case when
  $r_c \in F$.
  Since the top endpoint of edge $f$ is not a $T$-split node, its colour in
  $\tilde H_i$ must be $T' \ne T$.
  Thus, since $f$ is the root edge of $\tilde H_i$ representing one of the
  in-edges of $r_C$ in $D$, $r_C$ is not free, again a contradiction.
\end{proof}

By Lemma~\ref{lem:unique-network}, it suffices to provide a polynomial-time
algorithm that decides whether there exists a \revA{TNET}
with description $(\G, F^*, \T)$ and, if so, construct any such \revA{TNET}.
Our next lemma states that such an algorithm exists.

\begin{lem}
  \label{lem:network-from-description}
  Given a description $(\G, F^*, \T)$, it takes polynomial time to decide
  whether there exists a \revA{TNET} with this description
  and, if so, \leo{to} construct such a \revA{TNET}.
\end{lem}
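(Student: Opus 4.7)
The plan is to carry out the reconstruction procedure that is implicit in the proof of Lemma~\ref{lem:unique-network}, turning it into an algorithm that either outputs a CNET with description $(\G, F^*, \T)$ or reports (correctly) that no such CNET exists. I would proceed in three phases: (i) build the signature $\tilde H$ incrementally; (ii) unfold $\tilde H$ to a full CNET; (iii) verify.

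For phase (i), I would initialize $D_0$ and $\tilde H_0$ as empty digraphs and, at each iteration, look for a free node~$r_C$ in $\bar D_i$ using the definitions from the previous proof (which refer only to the guesses in $\G$, the components in $F^*$, and the attached/pendant subtree structure in $\T$, all polynomially testable). If no free node exists while $\bar D_i$ is nonempty, stop and report failure. Otherwise, pick a free $r_C$, use its wiring guess to identify candidate buddies (the nodes in $I(T')$ whose wiring guess is compatible with the two child root edges of $H(r_C)$ in $\tilde H_i$ being both coloured $T'$), merge the top endpoints of the relevant root edges of $\tilde H_i$ into a single node $H(r_C)$, and attach new root edges according to the guesses at $r_C$ and at its buddies. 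Before committing, I would check that all guesses being merged are mutually consistent (same colour set on merged edges, same target tree for each parent, compatibility with condition~\ref{cond:every-node-splits}); any inconsistency triggers a failure report. Since each iteration adds at least one component root to~$D_i$ and does only polynomial work, this phase terminates in polynomial time.

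For phase (ii), once $D_i = D$, I would turn the signature~$\tilde H$ into a CNET by, for every node labelled by an AAF root $r_C$ with $C \in F$, replacing that node by the common restriction $T(L(C))$ (which is well-defined because $F$ is an AAF, so this restriction is the same tree for all $T \in \T$), attaching the incoming signature edges to the root of this tree and the outgoing ones to the leaves of $T(L(C))$ corresponding to roots of attached subtrees in the trees of $\T$; nodes labelled only by $I(T)$-components remain single nodes. The colour sets on edges are read off from the wiring guesses and propagated by condition~\ref{cond:every-node-splits}. For phase (iii), I would verify that the resulting pair $(H', \E)$ satisfies conditions~\ref{cond:acyclic-network}--\ref{cond:every-node-splits} and that each $H'(T) \in \E$ is indeed an image of $T$ (polynomial by walking through each tree and checking suppression of degree-2 nodes yields $T$); if any check fails, output failure.

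For correctness, suppose a CNET $\H$ with description $(\G, F^*, \T)$ exists. Lemma~\ref{lem:unique-network} guarantees that at every iteration a free node exists and that the reconstructed $\tilde H_j$ matches the signature of $\H$, so the algorithm succeeds and produces a CNET equivalent to~$\H$. Conversely, if the algorithm succeeds, the final verification certifies that the output is a CNET with the prescribed description. The main obstacle is phase~(i): I must be certain that the notion of free node is strong enough to drive the construction in the positive case (already shown in Lemma~\ref{lem:unique-network}) while also being the right detector of failure in the negative case, i.e., that failure of the algorithm really implies no CNET with description $(\G, F^*, \T)$ exists. This follows because the existence of a CNET with this description would, by Lemma~\ref{lem:unique-network}, force at every step the presence of a free node and would make all the local consistency checks succeed; contrapositively, failure of any such check rules out every CNET with this description.
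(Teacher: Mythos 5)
Your phase~(i) matches the paper: the signature $\tilde H$ is rebuilt by rerunning the procedure from Lemma~\ref{lem:unique-network} and rejecting whenever $\bar D_i$ is nonempty but has no free node. The genuine new content of this lemma, however, is the expansion of the AAF root images into their components, and there your phase~(ii) has a gap. You propose to attach the outgoing signature edges of $H(r_C)$ ``to the leaves of $T(L(C))$ corresponding to roots of attached subtrees,'' but the attached subtrees do not hang off leaves of the component $C$: they branch off \emph{edges} of $C$, at positions that can differ from tree to tree. Two things must therefore be checked and computed, and your proposal addresses neither. First, a child edge $e$ of $H(r_C)$ with $\size{C(e)}\ge 2$ may branch off one edge $f$ of $C$ in one tree of $C(e)$ and off a different edge $f'$ in another; in that case no CNET with this description exists, and the algorithm must detect this. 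Second, when several edges of $E_f$ branch off the same edge $f$ of $C$, they must be attached in an order simultaneously consistent with every tree's ordering along $f$; the paper handles this by building, for each $f$, a DAG $D_f$ on $E_f$ whose edges record the above/below constraints imposed by each tree, testing $D_f$ for acyclicity, and attaching in topological order.

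Your final verification phase does not repair this. If your phase~(ii) attaches the edges in some arbitrary order that happens to violate one tree's constraints while a consistent order exists, phase~(iii) will reject and your algorithm will report that no CNET with description $(\G,F^*,\T)$ exists --- a false negative, which contradicts the lemma's claim. Your correctness argument (``if a CNET exists, Lemma~\ref{lem:unique-network} guarantees the algorithm succeeds'') only covers the signature; it says nothing about whether the component expansion finds a valid attachment when one exists. To close the gap you need to make the attachment step an explicit constraint-satisfaction computation (the same-edge test plus the per-edge topological sort), as the paper does, rather than a fixed construction followed by a check.
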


\begin{proof}
  \revA{By Lemma~\ref{lem:unique-network}, there exists a polynomial-time algorithm
  for constructing the signature $\tilde H$ of a \revA{TNET} $\H = (H, \E)$ with
  description $(\G, F^*, \T)$, if such a \revA{TNET} exists.}
  This algorithm provides a first test that can be used to reject invalid
  descriptions:
  If the algorithm reaches an iteration where $\bar D_i$ is non-empty but none
  of its nodes is free, then the description is invalid because
  the proof of Lemma~\ref{lem:unique-network} shows that, if $(\G, F^*, \T)$
  is the description of a \revA{TNET}, then there exists
  a free node in each iteration.
  Thus, the algorithm aborts and rejects the description.
  If the algorithm does not reject the description, its output is a signature
  $\tilde H$ that respects all wiring guesses in $\G$.
  However, this signature may not correspond to a network $H$ that displays
  all input trees.
  Next we present a polynomial-time algorithm for constructing $\H = (H, \E)$
  from $(\tilde H, F^*, \T)$ or determining that no \revA{TNET} $\H = (H, \E)$
  with description $(\G, F^*, \T)$ exists.

  The nodes of $\tilde H$ are of two types: images of AAF roots and images of
  \leo{nodes in~$I$.}
  To obtain a \revA{TNET} $\H = (H, \E)$ with description $(\G, F^*, \T)$ from
  $\tilde H$, \leo{we let~$H$ initally be equal to~$\tilde H$ and} replace each AAF root image with the AAF component it represents.
  We process these AAF root images bottom-up, that is, in reverse topological
  order.\footnote{This is not really essential, but it simplifies the
    description of the algorithm.}
  Consider such an image $H(r_C)$ of the root $r_C$ of an AAF component $C$, and
  let $E$ be the set of child edges of $H(r_C)$ in $\tilde H$.
  We remove the edges in $E$ from $\tilde H$ and attach $C$ below $r_C$, setting
  $C(e) = \T$ for every edge $e$ of $C$.
  Our goal now is to reattach the edges in $E$ to edges of $C$ so that, for all
  $T \in \T$, the descendant edges of $H(r_C)$ with colour $T$ form an image
  of the \leo{pendant subtree of~$T$ with root~$r_C$}.
  For each tree $T$, observe that the edges in $E$ coloured $T$ represent the
 \leo{subtrees attached to} $C$ in $T$.
  We need to attach these edges to $C$ \leo{in~$H$} so that each edge branches off the same edge of $C$ as in $T$ and edges that branch off the same edge of $C$ in $T$ do so in the same order as in $T$.

  First we test every edge $e \in E$ whether $e$ branches off the same edge of
  $C$ in every tree \leo{$T\in C(e)$}.
  If this is the case for all edges $e \in E$, then we can partition $E$ into
  subsets $E_f$, one per edge $f$ of $C$, such that all edges in $E_f$ branch
  off edge $f$.
  If there is an edge $e$ that branches off some edge $f$ in a tree $T \in C(e)$
  and off a different edge $f'$ in a tree $T' \in C(e)$, it is impossible to
  attach this edge to $C$ in a way that satisfies both constraints.
  Since the edges of $T$ and $T'$ represented by $e$ are determined by
  $\tilde H$,
  which in turn is uniquely defined by $(\G, F^*, \T)$, there is thus no network
  $\H$ with description $(\G, F^*, \T)$, so the algorithm aborts and reports
  that there is no such network.

  Given the partition of $E$ into subsets $E_f$ such that the edges in $E_f$
  branch off edge $f$, we need to attach the edges in each such set $E_f$
  in an ordering consistent with the input trees.
  Let $E_{f,T}$ be the subset of edges in $E_f$ that have colour $T$.
  Tree $T$ determines the ordering in which these edges need to be attached to
  $f$. We define a DAG $D_f$ whose nodes represent the edges in $E_f$, and \leo{which has an edge~$(g,g')$ precisely if there is a tree~$T$ such that~$g,g'\in E_{f,T}$ and~$g$ branches off~$f$ in~$T$ above~$g'$.}
  There exists an ordering in which to attach \leo{the edges of~$E_f$} to $f$ so that
  the ordering constraints imposed by all trees in $\T$ are satisfied if and
  only if $D_f$ is acyclic. \leo{Moreover, if~$D_f$ is indeed acyclic, then a topological ordering of $D_f$ provides a valid ordering in which the edges can be attached.}
  Thus, we can test whether such an ordering exists and, if so, compute such
  an ordering in time $\OhOf{\size{E_f}}$ per edge $f$.
  If no such ordering exists, the algorithm once again aborts and reports that
  there is no \revA{TNET} with description $(\G, F^*, \T)$.
  If we can find a correct ordering of the edges attached to each edge $f$
  of $C$, then the replacement of $r_C$ with $C$ in this manner results in
  a network where all descendant edges of $H(r_C)$ with colour $T$ form
  an image of \leo{the pendant subtree of~$T$ with root~$r_C$}, for all $T \in \T$.
  Thus, after replacing each AAF root $r_C$ with its corresponding component
  $C$ in this fashion, we obtain a \revA{TNET} $\H = (H, \E)$ with
  description $(\G, F^*, \T)$.
\end{proof}

\revA{To summarize, the overall strategy for trying to reconstruct a hybridization network~$H$ with hybridization number~$k$ for an input set~$\T$ of three trees is as follows. First, we guess the deletion AAF~$F$ of~$H$ in time $\OhOf{ 2^{5k} \cdot 2^{10k} \cdot \poly{n} }$. Then we identify the set~$I$ of invisible nodes and add them to~$F$, obtaining the extended AAF~$F^*$, and guess $\G$, the wiring of each component of~$F^*$. The total number of possible guesses for~$\G$ is $10^k \cdot 17^{3(k-1)} = 49130^k / 4913$. For each possible description~$(\G, F^*, \T)$, we try to construct the signature $\tilde H$ of a \revA{TNET} with this description using Lemma~\ref{lem:unique-network} in polynomial time. For each signature $\tilde H$, we decide whether there exists a \revA{TNET} with this description (see Lemma~\ref{lem:network-from-description}), again in polynomial time. Once a correct TNET has been found, it can be expanded to the hybridization network~$H$ (see Section~\ref{sec:network-to-tight-network}). The overall running time is $\OhOf{ 2^{5k} \cdot 2^{10k} \cdot 10^k \cdot 17^{3(k-1)} \cdot \poly{n} }$, which is $\OhOf{  1,609,891,840^k \cdot \poly{n} }$.
}


\section{Conclusions}\label{sec:concl}

\revA{For two trees, a hybridization network can easily be constructed in polynomial time once the AAF is known. No guessing is required since the AAF carries all the necessary information. For more than two trees,
it seemed natural enough to try to guess the wiring structure that determines how the AAF components need to be glued together into a network. For any constant number of trees, there are only a constant number of choices for the wiring of the root of each component, so with $\OhOf{k}$ components, one would obtain an $\OhOf{c^k \cdot \poly{n}}$-time algorithm.}
Unfortunately, guessing the wiring structure of AAF components turned out
not to be enough, even for three trees, because there are examples of three
input trees such that every optimal network displaying these trees contains
an \emph{invisible component}: a group of nodes that are isolated from all
taxa once all hybridization edges are deleted, see Figure~\ref{fig:invisible} \leo{in the introduction}.
We call these components invisible because they are not represented in any
form in the AAF.

Guessing the number and structure of these invisible components seems extremely
challenging.
In this paper, we showed that one can get away without having to guess these
components in the case of three trees because, for three trees, these components
are not invisible after all: They may not be represented in the AAF, but they
are present as nodes in the three input trees, at least as long as we consider
only tight networks (and we have shown that we may do this without loss of generality).
This is the key to our $\OhOf{c^k \cdot \poly{n}}$-time algorithm for three
trees.
Unfortunately, it appears that we simply scraped by.\comment{LI}{Scraped by???}
While the framework of our algorithm extends to more than three trees, it seems
that already for four trees, there are input instances where the optimal
network includes truly invisible nodes: nodes whose only purpose is to change
the way in which edges of the tree images are braided together along network
edges, see Figure~\ref{fig:braided}.
Thus, the main open problem is to discover structural properties that, while
unlikely to eliminate the need to guess the existence of these braiding
structures in the network altogether, at least limit the number of possible
guesses to be explored.

  \begin{figure}[t]
    \centering
    \includegraphics{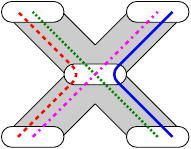}
    \caption{A ``truly invisible'' node of a network with four trees embedded in it. None of the four trees branches in this network node.}
    \label{fig:braided}
  \end{figure}

Another interesting question that arises from our work is whether guessing
the wiring of the extended AAF components as part of the reconstruction is
necessary at all, at least in the case of three trees.
Since all these components are visible in the input trees, it could be possible
that one can construct the entire network directly from the AAF, that is,
as in the case of two trees, the hard core of the problem is finding the right
AAF.
However, we conjecture that this is not the case:
i.e., that even given the deletion AAF of an optimal hybridization network
for the three input trees, it remains NP-hard to find the network.

\bibliographystyle{plain}
\bibliography{bibliographyleo}

\appendix

\section{An Example of constructing a network from its description}

\label{sec:example}

This appendix provides an example of the construction described in the proofs of
Lemmas~\ref{lem:unique-network} and~\ref{lem:network-from-description}.
Consider the description $(\G, F^*, \T)$ in
Figure~\ref{fig:reconstruction-input}.
The \revA{TNET} in Figure~\ref{fig:reconstruction-output} has this description.
We first show how to construct its signature using the construction in the
proof of Lemma~\ref{lem:unique-network}.
This signature is shown in Figure~\ref{fig:reconstruction-signature}.
Then we discuss how to construct the \revA{TNET} in
Figure~\ref{fig:reconstruction-output} from this signature.
\leo{The hybridization network induced by this \revA{TNET}} is shown in Figure~\ref{fig:reconstruction-final}.

\begin{figure}[b]
  \centering
  \includegraphics{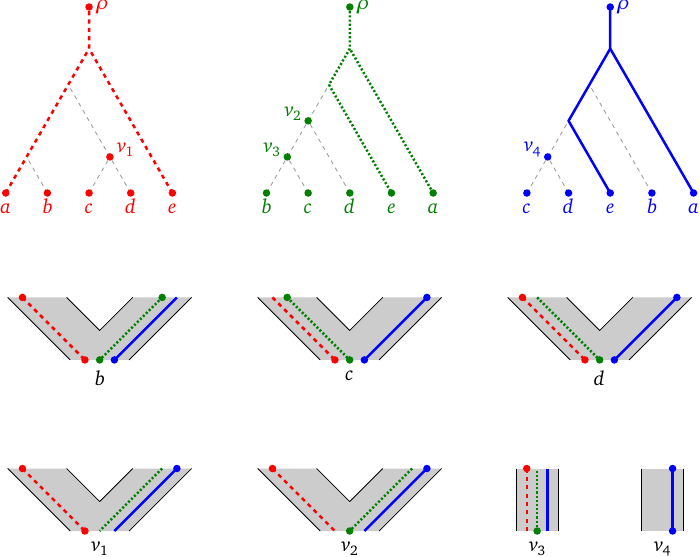}
  \caption{The input to the network reconstruction, including the AAF $F$ (shown
    in bold in the three input trees), \leo{the set of ``invisible nodes'' $I=\{v_1,\ldots ,v_4\}$}, and the wiring guesses
    for the resulting set of components of $F^*$. There is no guess for the component $\set{a, e, \rho}$ because it includes
    the root~\leo{$\rho$} of the three trees.}
  \label{fig:reconstruction-input}
\end{figure}

\paragraph{Lemma~\ref{lem:unique-network}: Constructing the signature.}

The DAG $D$ used in the construction of the signature $\tilde H$ of any \revA{TNET}
with the description in Figure~\ref{fig:reconstruction-input} is shown in
Figure~\ref{fig:signature-dag}.
This DAG represents the adjacency of components of $F^*$ in the three input
trees in $\T$.

\begin{figure}[t]
  \centering
  \includegraphics{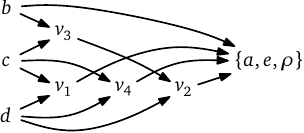}
  \caption{The DAG $D$ used in the reconstruction of the signature $\tilde H$
    in Figure~\ref{fig:reconstruction-signature} from the description in
    Figure~\ref{fig:reconstruction-input}.}
  \label{fig:signature-dag}
\end{figure}

\begin{figure}[t]
  \centering
  \includegraphics{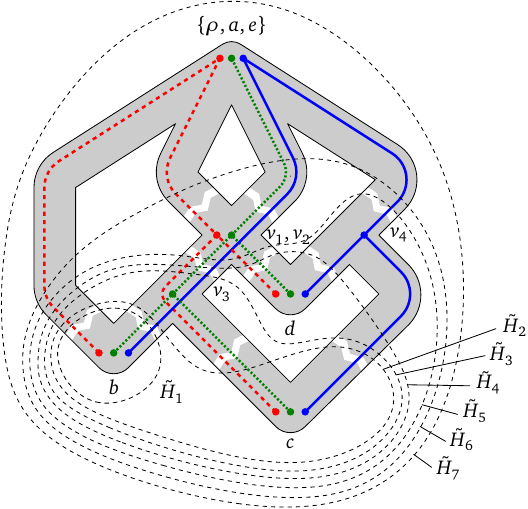}
  \caption{The signature $\tilde H$ of any \revA{TNET} with the description
    $(\G, F^*, \T)$ shown in Figure~\ref{fig:reconstruction-input}.
    The \leo{partial signatures} $\tilde H_1, \tilde H_2, \dots, \tilde H_7 = \tilde H$
    constructed incrementally are indicated by dashed lines \revA{(edges that are partly in the indicated region are also contained in the partial signature)}.}
  \label{fig:reconstruction-signature}
\end{figure}

\begin{figure}[t]
  \centering
  \includegraphics{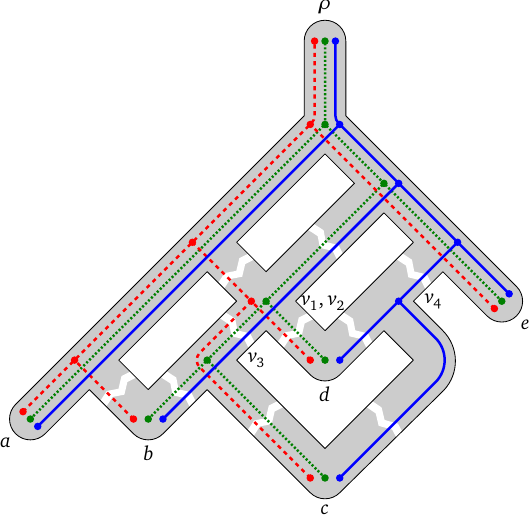}
  \caption{The \revA{TNET} obtained from the signature in
    Figure~\ref{fig:reconstruction-signature} by expanding the components
    of~$F$.}
  \label{fig:reconstruction-output}
\end{figure}

\begin{figure}[t]
  \centering
  \includegraphics{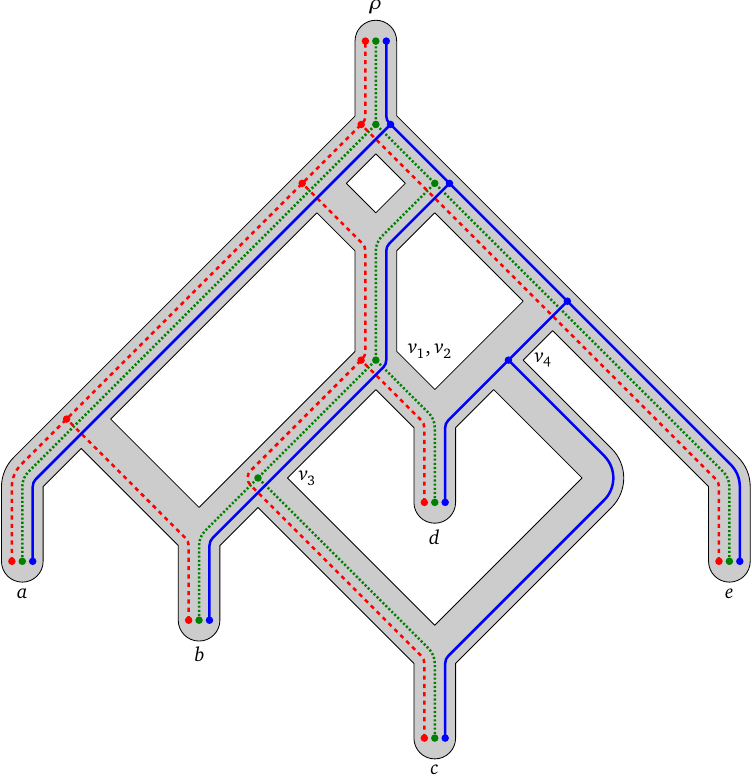}
  \caption{The hybridization network induced by the \revA{TNET} in
    Figure~\ref{fig:reconstruction-output}, obtained by separating reticulations,
split nodes, and leaves.} \label{fig:reconstruction-final}
\end{figure}

Before the first iteration, we have $\bar D_0 = D$.
The only nodes with in-degree $0$ in $\bar D_0$ are $b$, $c$, and $d$.
They are all free because they have no in-edges at all.
Assume we pick $b$ as the first node to add to $\tilde H$.
We create the node $H(b)$ and add parent edges according to $b$'s wiring guess
to obtain the partial network $\tilde H_1$ in
Figure~\ref{fig:reconstruction-signature}.
Since $b$ is the root of an AAF component, it has no buddies, so
$D_1$ has the node set $\set{b}$.

$\bar D_1$ has two nodes of in-degree $0$, namely $c$ and $d$.
Both are again free.
Assume we choose $c$ as the next node to add to $D_1$ to obtain $D_2$.
Since $c$ is again the root of an AAF component, it has no buddies, so the
node set of $D_2$ is $\set{b, c}$.
We construct the graph $\tilde H_2$ in Figure~\ref{fig:reconstruction-output}
from $\tilde H_1$ by creating a node $H(c)$ and \leo{adding} parent edges of this node
according to $c$'s wiring guess.

$\bar D_2$ has two nodes of in-degree $0$, namely $d$ and $v_3$.
Both nodes are free: $d$ is free \leo{because it is the root of an AAF component}; $v_3$ is free because $\tilde H_2$ has two root
edges above $H(b)$ and $H(c)$, which are children of $v_3$ in the green dotted
tree, and the top endpoints of both edges are coloured green by the wiring guesses
for $b$ and $c$.
Let us assume we choose $v_3$ as the next node to add to $D_2$ to obtain
$D_3$.
We create the node $H(v_3)$ by identifying the top endpoints of the two green
dotted parent edges of $H(b)$ and $H(c)$ and add a parent edge above $H(v_3)$
according to $v_3$'s wiring guess.
This gives the graph $\tilde H_3$ shown in
Figure~\ref{fig:reconstruction-signature}.
Since the only tree common to the colour sets of the parent edges of $H(b)$
and $H(c)$ is the green dotted one, $v_3$ has no buddies.
Thus, $D_3$ has the node set $\set{b, c, v_3}$.

$\bar D_3$ has $d$ as its only in-degree-$0$ node, and $d$ is free.
We obtain $D_4$ by adding $d$ to $D_3$.
Node $d$ has no buddies because it is the root of an AAF component.
To obtain $\tilde H_4$ from $\tilde H_3$, we create the node
$H(d)$ and add parent edges according to $d$'s wiring guess.

$\bar D_4$ has three nodes of in-degree $0$, namely $v_1$, $v_2$, and $v_4$.
The two child edges of $v_1$ are represented by the red dashed parent edges of
$H(v_3)$ and $H(d)$ in $\tilde H_4$.
According to the wiring guesses for $v_3$ and~$d$, their top endpoints are
red.
Since $v_1$ belongs to the red dashed tree, $v_1$ is free.
The same two edges also represent the child edges of $v_2$.
However, $v_2$ is green and, thus, is not free.
Finally, the two child edges of $v_4$ are represented by the blue solid parent
edges of $H(c)$ and $H(d)$, both of which have blue top endpoints according
to the wiring guesses for $c$ and $d$.
Thus, $v_4$ is also free.
Assume we choose $v_4$ as the next node to add to $D_4$ to obtain $D_5$.
We create the node $H(v_4)$ in $\tilde H_5$ by identifying the top endpoints
of the blue solid parent edges of $H(c)$ and $H(d)$ and then create a parent edge
of $H(v_4)$ according to $v_4$'s wiring guess.
This produces the graph $\tilde H_5$ in
Figure~\ref{fig:reconstruction-signature}.
Since the colour sets of the two child edges of $H(v_4)$ have only the blue
solid tree in common, $v_4$ has no buddies and $D_5$ has node set
$\set{b, c, d, v_3, v_4}$.

Nodes $v_1$ and $v_2$ are the only nodes of in-degree $0$ in $\bar D_5$.
Just as in $\bar D_4$, $v_1$ is free and $v_2$ is not.
Thus, we choose $v_1$ as the node to add to $D_5$ to obtain $D_6$.
The two child edges of $v_1$ are represented by the red dashed parent edges
of $H(v_3)$ and $H(d)$ in $\tilde H_5$.
We identify their top endpoints to create the node $H(v_1)$ and add parent
edges according to the wiring guess for $v_1$.
This produces the graph $\tilde H_6$ in
Figure~\ref{fig:reconstruction-signature}.
Now observe that the two child edges of $H(v_1)$ in $\tilde H_6$ are also
coloured green (dotted).
Thus, we identify the node that is the parent of the two edges of the green
dotted tree represented by these child edges, which is node $v_2$.
Node $v_2$ becomes a buddy of $v_1$ and is added to $D_5$ along with $v_1$
to obtain $D_6$.
Thus, $D_6$ has the node set $\set{b, c, d, v_1, v_2, v_3, v_4}$.
Note that making $v_1$ and $v_2$ buddies does not create any conflicts because
they both have the same wiring guess in $\G$.

Finally, the only node remaining in $\bar D_6$ is $\set{a, e, \rho}$.
The root edges of $\tilde H_6$ represent exactly the set of pendant edges
of the AAF component $\set{a, e, \rho}$ in the three input trees, so
$\set{a, e, \rho}$ is free in $\bar D_6$.
We create a node $H(\set{a, e, \rho})$ in $\tilde H_7$ by identifying the top
endpoints of all root edges of $\tilde H_6$.
Since there is no wiring guess for $\set{a, e, \rho}$ in $\G$, we do not add
any parent edges to $H(\set{a, e, \rho})$, and $\tilde H_7 = \tilde H$ is the
final signature.

It is easily verified that we would have obtained the exact same signature had
we chosen different nodes to add to $D_i$ in iterations where $\bar D_i$
contained more than one free node.

\paragraph{Lemma~\ref{lem:network-from-description}: Expanding AAF components.}

In our example, the only non-trivial AAF component to be expanded is the
component $\set{a, e, \rho}$.
This component has two non-root edges.
Let $f_a$ be the parent edge of $a$, and let $f_e$ be the parent edge
of $e$ in this component.
In $\tilde H$, the node $H(\set{a, e, \rho})$ has four child edges:
a red dashed parent edge $e_1$ of $H(b)$, a red dashed parent edge
$e_2$ of $H(v_1) = H(v_2)$,
a green-blue (dotted-solid) parent edge $e_3$ of $H(v_1) = H(v_2)$, and a blue
solid parent edge $e_4$ of $H(v_4)$.
Edges $e_1$ and $e_2$ attach to $f_a$ in the red dashed tree and do not
represent any edges in any other trees, so we add them to $E_{f_a}$.
Edge $e_3$ attaches to $f_e$ in the green dotted and the blue solid trees, so
there is no conflict and we add it to $E_{f_e}$.
Edge $e_4$ attaches to $f_e$ in the blue solid tree and does not represent any
\leo{edge} in any other \leo{tree}, so we add it to $E_{f_e}$.

The DAG $D_{f_a}$ has two nodes representing edges $e_1$ and $e_2$ with
an edge from $e_2$ to $e_1$ because $e_2$ attaches to $f_a$ above $e_1$
in the red dashed tree.
A topological ordering of $D_{f_a}$ places these edges in the order
$\seq{e_2, e_1}$, and \leo{this} is the order in which we attach $e_2$ and $e_1$
to $f_a$.

The DAG $D_{f_e}$ has two nodes representing edges $e_3$ and $e_4$ with
an edge from $e_3$ to $e_4$ because $e_3$ attaches to $f_e$ above $e_4$
in the blue solid tree.
The green dotted tree does not impose any conflicting ordering constraints
because only edge $e_3$ belongs to this tree.
A topological ordering of $D_{f_e}$ places $e_3$ and $e_4$ in the order
$\seq{e_3, e_4}$, and this is the order in which we attach $e_3$ and
$e_4$ to $f_e$.
The result is the \revA{TNET} shown in
Figure~\ref{fig:reconstruction-output}.

\leo{Finally, the hybridization network induced by this \revA{TNET} is shown in Figure~\ref{fig:reconstruction-final}.}

\end{document}